\title{\LARGE \bf Safe Linear-Quadratic Dual Control with Almost Sure Performance Guarantee}
\author{Yiwen Lu and Yilin Mo
\thanks{The authors are with Department of Automation, BNRist, Tsinghua University. Emails: luyw20@mails.tsinghua.edu.cn, ylmo@tsinghua.edu.cn.}
}
\newtheorem{theorem}{Theorem}
\newtheorem{lemma}{Lemma}
\newtheorem{remark}{Remark}
\newtheorem{definition}{Definition}
\DeclareMathOperator*{\argmin}{arg\,min}
\DeclareMathOperator{\Tr}{tr}
\DeclareMathOperator{\vect}{vec}
\newcommand{\R}[1]{\mathbb{R}^{#1}}
\newcommand{\E}{\mathbb{E}}
\renewcommand{\P}{\mathbb{P}}
\newcommand{\e}{\mathrm{e}}
\renewcommand{\O}{\mathcal{O}}
\newcommand{\Ot}{\tilde{\mathcal{O}}}
\newcommand{\C}{\mathcal{C}}
\newcommand{\F}{\mathcal{F}}
\newcommand{\gaussian}[2]{\mathcal{N}\left( {#1}, {#2} \right)}
\newcommand{\Ahat}{\hat{A}}
\newcommand{\Bhat}{\hat{B}}
\newcommand{\Hhat}{\hat{H}}
\newcommand{\Phat}{\hat{P}}
\newcommand{\Khat}{\hat{K}}
\newcommand{\pihat}{\hat{\pi}}
\newcommand{\ut}{\tilde{u}}
\newcommand{\xt}{\tilde{x}}
\newcommand{\wt}{\tilde{w}}
\newcommand{\At}{\tilde{A}}
\newcommand{\Vt}{\tilde{V}}
\newcommand{\Qt}{\tilde{Q}}
\newcommand{\pit}{\tilde{\pi}}
\newcommand{\norm}[1]{\left\| {#1} \right\|}
\newcommand{\ind}[1]{\mathbf{1}_{\{{#1}\}}}
\newcommand{\Toep}{\mathcal{T}}
\newcommand{\ustep}[1]{\mathbf{u}_{#1}}
\newcommand{\xstep}[1]{\mathbf{x}_{#1}}
\newcommand{\Uvert}{\mathcal{U}^{v}}
\newcommand{\Uhori}{\mathcal{U}^{h}}
\newcommand{\Xvert}{\mathcal{X}_{1}^{v}}
\newcommand{\Xhorione}{\mathcal{X}_{1}^{h}}
\newcommand{\Xhorizero}{\mathcal{X}_{0}^{h}}
\newcommand{\An}{\mathscr{A}}
\newcommand{\Wn}{\mathscr{W}}
\newcommand{\bpi}{\bar{\pi}}
\newcommand{\xb}{\bar{x}}
\newcommand{\ub}{\bar{u}}
\newcommand{\Q}{\mathcal{Q}(M,\rho,P)}
\newcommand{\Cout}{\mathcal{C}_1(\rho,P,K)}
\newcommand{\Cin}{\mathcal{C}_2(\rho,K)}
\newcommand{\G}{\mathcal{G}(M,t,\rho,P,K)}
\newcommand{\Efac}{\mathcal{E}(M,\rho,P)}
\newcommand{\revise}[1]{\textcolor{black}{#1}}
\newcommand{\citep}[1]{\cite{#1}}
\begin{document}

\maketitle

\begin{abstract}

This paper considers the linear-quadratic dual control problem where the system parameters need to be identified and the control objective needs to be optimized in the meantime.
Contrary to existing works on data-driven linear-quadratic regulation, which typically provide error or regret bounds within a certain probability, we propose an online algorithm that guarantees the asymptotic optimality of the controller in the almost sure sense. Our dual control strategy consists of two parts: a switched controller with time-decaying exploration noise and Markov parameter inference based on the cross-correlation between the exploration noise and system output. Central to the almost sure performance guarantee is a safe switched control strategy that falls back to a known conservative but stable controller when the actual state deviates significantly from the target state. We prove that this switching strategy rules out any potential destabilizing controllers from being applied, while the performance gap between our switching strategy and the optimal linear state feedback is exponentially small. Under our dual control scheme, the parameter inference error scales as $\O(T^{-1/4+\epsilon})$, while the suboptimality gap of control performance scales as $\O(T^{-1/2+\epsilon})$, where $T$ is the number of time steps, and $\epsilon$ is an arbitrarily small positive number. Simulation results on an industrial process example are provided to illustrate the effectiveness of our proposed strategy.

\end{abstract}

\section{Introduction}

One of the most fundamental and well-studied problems in optimal control, Linear-Quadratic Regulation~(LQR) has recently aroused renewed interest in the context of data-driven control and reinforcement learning. Considering it is usually challenging to obtain an exact system model from first principles, and that the system may slowly change over time due to various reasons, e.g., component wear-out, data-driven regulation of unknown linear systems has become an active research problem in the intersection of machine learning and control, with recent works including e.g.,~\cite{dean17,mania19,cohen19,wagenmaker20}. In particular, from the perspective of reinforcement learning theory, the LQR problem has become a standard benchmark for continuous control.

In this paper, we focus on the dual control~\citep{dual} setting, also known as online adaptive control in the literature, where the same policy must be adopted to identify the system parameters and optimize the control objective, leading to the well-known exploration-exploitation dilemma. Recently, it was shown by~\cite{simchowitz20} that the optimal regret for this problem setting scales as $\Tilde{\Theta}( \sqrt{T} )$, which can be achieved with probability $1 - \delta$ using a certainty equivalent control strategy, where the learner selects control inputs according to the optimal controller for the current estimate of the system while injecting time-decaying exploration noise. However, the strategy proposed in this work, like those in its predecessors~\citep{abbasi11,dean18,mania19}, may have a nonzero probability $\delta$ of failing. Furthermore, it shall be noticed that $\delta$ has been chosen as a fixed design parameter in the aforementioned works, which implies the probability of failing does not converge to zero even if the policy is run indefinitely. The above observation gives rise to the question that we address in this paper:

\begin{quote}
    \begin{center}
        Can we design a learning scheme for LQR dual control, such that the policy adopted in \emph{almost every} trajectory converges to the optimal policy?
    \end{center}
\end{quote}

We identify that the above goal can hardly be achieved by a naive certainty equivalent learning scheme.
Qualitatively, the system parameters learned from data are always corrupted by random noise, and as a result, the controller proposed in previous works may destabilize the system, albeit with a small probability, causing catastrophic system failure.
Based on the above reasoning, we propose a notion of \revise{bounded-cost safety} for the LQR dual control problem: we recognize a learning scheme to be safe if no destabilizing control policy is applied during the entire learning process.

In this paper, we propose a learning scheme that satisfies the above definition of \revise{bounded-cost safety}, and guarantees both the parameter inference error and the suboptimality gap of controller performance converge to zero almost surely. Our strategy consists of two parts: a safe switched controller and a parameter inference algorithm. The switched controller can be viewed as a safety-augmented version of the certainty equivalent controller: it normally selects control inputs according to the optimal linear feedback for the currently estimated system parameters, but falls back to a conservative controller for several steps when the actual state deviates significantly from the target state. We prove that this switching strategy ensures the \revise{bounded-cost safety} of the learning process, while only inducing a suboptimality gap that decays exponentially as the switching threshold increases. For the parameter inference part, in contrast to the direct least-squares approach for estimating the matrices $A,B$ widely adopted in the literature, we estimate the Markov parameters, also known as the impulse response of the system, based on the cross-correlation between the exploration noise and the system output, and establish the almost-sure convergence using a law of large numbers for martingales. We prefer this approach for its clarity in physical meaning and simplicity of convergence analysis, but we do not foresee substantial difficulty in replacing our parameter inference module with standard least-squares. We prove that under the above described learning scheme, the parameter inference error scales as $\O(T^{-1/4+\epsilon})$, while the suboptimality gap of control performance scales as $\O(T^{-1/2+\epsilon})$, where $T$ is the number of time steps, and $\epsilon$ is an arbitrarily small positive number. Both the above results match the corresponding asymptotic rates in~\cite{simchowitz20}, which provides an rate-optimal algorithm for online LQR in the high-probability regime.

The main contributions of this paper are as follows:
\begin{enumerate}
    \item We propose a practical notion of safety for the LQR dual control problem that has not been considered in the literature, and provide an instance of safe learning scheme based on a switching strategy.
    \item We prove almost sure convergence rates of parameter inference error and suboptimality gap of control performance for our scheme, which match the corresponding optimal rates in the high-probability regime. To the best of our knowledge, this is the first analysis of the almost sure convergence rate for online LQR.
\end{enumerate}


The rest of this paper is organized as follows: Section~\ref{sec:problem} gives a brief introduction of LQR, formulates the LQR dual control problem, and defines the performance metrics as well as the notion of \revise{bounded-cost safe} learning scheme.
Section~\ref{sec:algorithm} presents and interprets our algorithm. Section~\ref{sec:results} states the main theoretical results and characterizes the convergence rates. Section~\ref{sec:simulation} provides simulation results on an industrial process example to illustrate the effectiveness of our proposed strategy. Section~\ref{sec:related} summarizes the related literature. Finally, Section~\ref{sec:conclusion} gives concluding remarks and discusses future directions.


\section{Problem Formulation}\label{sec:problem}

We consider the control of the following discrete-time linear system:
\begin{equation}
    x_{k+1}=A x_{k}+B u_{k}+w_{k},
    \label{eq:dynamics}
\end{equation}
where $x_k \in \R{n}$ is the state vector, $u_k \in \R{p}$ is the input vector, and $w_k \in \R{n}$ is the process noise. We assume $x_0 \sim \gaussian{0}{X_0}, w_k \sim \gaussian{0}{W}$, and that $x_0, w_0, w_1, \ldots$ are pairwise independent. We also assume w.l.o.g. that $(A,B)$ is controllable.

We consider control policies of the form
\begin{equation}
\pi: \R{n} \times \R{q} \to \R{p} \times \R{q}, (u_k, \xi_{k+1}) = \pi(x_k, \xi_k),\label{eq:pi}
\end{equation}
which can be either deterministic or stochastic, where $\xi_k \in \R{q}$ is the internal state of the policy. Notice that we allow the flexibility of the policy being non-Markovian with the introduction of $\xi_k$, and we also use the simplified notation $u_k = \pi(x_k)$ if $\pi$ is Markovian. The performance of a policy $\pi$ can be characterized by the infinite-horizon quadratic cost
\begin{equation}
    J^\pi = \limsup_{T\to\infty} \frac1T \E\left[ \sum_{k=0}^{T-1} x_k^\top Q x_k + u_k^\top R u_k \right], \label{eq:J_pi}
\end{equation}
where $Q\succ 0, R\succ 0$ are known weight matrices specified by the system operator.

We denote the optimal cost and the optimal control law by
\begin{equation}
    J^* = \inf_\pi J^\pi, \pi^* \in \argmin_\pi J^\pi. \label{eq:J_opt}
\end{equation}
It is well-known that the optimal policy is a linear function of the state $\pi^*(x) = K^* x$, with associated cost $J^* = \Tr\left( WP^* \right)$, where $P^*$ is the solution to the discrete-time algebraic Riccati equation
\begin{equation}
    P^*=Q+A^{\top} P^* A-A^{\top} P^* B\left(R+B^{\top} P^* B\right)^{-1} B^{\top} P^* A,
    \label{eq:dare}
\end{equation}
and the linear feedback control gain $K^*$ can be determined by 
\begin{equation}
K^*=-\left(R+B^{\top} P B\right)^{-1} B^{\top} P^* A.
    \label{eq:K}
\end{equation}
Based on the definitions above, we can use $J^\pi - J^*$ to measure of the suboptimality gap of a specific policy $\pi$.

In the online LQR setting, the system and input matrices $A,B$ are assumed unknown, and the learning process can be viewed as deploying a sequence of time-varying policies $\{\pi_k\}$ with the dual objectives of exploring the system parameters and stabilizing the system. To characterize the safety of a learning process, we make the definitions below:

\begin{definition}
A policy $\pi$ is destabilizing if $J^\pi = +\infty$.
\label{def:destabilizing}
\end{definition}

\begin{definition}
A learning process applying policies $\{\pi_k\}_{k=0}^{\infty}$ is \revise{bounded-cost safe}, if $\pi_k$ is not destabilizing for any time $k$ and for any realization of the noise process.
\label{def:safety}
\end{definition}

Notice that Definition~\ref{def:destabilizing} is a generalization of the common notion of destabilizing linear feedback gain, i.e., $\pi(x) = Kx$ is destabilizing when $\rho(A+BK)\geq 1$, which is equivalent to $J^\pi = +\infty$. \revise{Based on the notion of destabilizing policies, we propose the concept of bounded-cost safety in Definition~\ref{def:safety}, which requires that destabilizing policies, or policies with unbounded cost, are never applied. It should be pointed out that bounded-cost safety does not guarantee the stability of trajectories, but is an indicator of the reliability of a learning scheme.}

We assume the system is open-loop strictly stable, i.e., $\rho(A) < 1$. Indeed, if there is a known stabilizing linear feedback gain $K_0$, then we can apply the dual control scheme on the pre-stabilized system $(A+BK_0, B)$ instead of $(A,B)$. Existence of such a known stabilizing linear feedback gain a standard assumption in previous works on online LQR~\citep{mania19,simchowitz20}, and relatively easy to establish through coarse system identification\revise{~\citep{dean17,faradonbeh2018finite} or adaptive stabilization methods~\citep{adaptive_stabilization,adaptive_stabilization_2}.}
\section{Algorithm}\label{sec:algorithm}

The complete algorithm we propose \revise{for} LQR dual control is presented in Algorithm~\ref{alg:main}. The modules in the algorithm will be described later in this section.

\subsection{Safe control policy}

\renewcommand{\algorithmicrequire}{\textbf{Input:}}
\renewcommand{\algorithmicensure}{\textbf{Iteration:}}

\begin{algorithm}[t]
	\caption{Safe LQR dual control}
	\begin{algorithmic}[1]
        \Require State dimension $n$, input dimension $p$, exploratory noise decay rate $\beta$
		\For{$k = 0,1,\ldots, n+p-1$}
		\State $\xi_{k+1} \gets 0$
 		\State Apply control input $u_k \gets (k+1)^{-\beta} \zeta_k$, where $\zeta_k \sim \gaussian{0}{I_p}$
		\EndFor
		
		\For{$k = n+p,n+p+1,\ldots$}
		\State Observe the current state $x_k$
		\For{$ \tau = 0, 1, \ldots, n + p - 1 $}
		    \State $\displaystyle \begin{aligned}\Hhat_{k,\tau} \gets & \frac{1}{k - \tau}\sum_{i=\tau + 1}^k  (i-\tau)^\beta  \left[x_i - \sum_{t = 0}^{\tau - 1} \Hhat_{k, t} \ut_{i-t-1}  \right] \zeta_{i-\tau-1}^\top\end{aligned}$   \label{line:Hktau}
		\EndFor
        \State Reconstruct $\Ahat_k, \Bhat_k$ from $\Hhat_{k,0}, \ldots, \Hhat_{k,n+p-1}$ using Algorithm~\ref{alg:dd_lsq}
        \State Compute certainty equivalent feedback gain $\Khat_k$ by replacing $A, B$ with $\Ahat_k, \Bhat_k$ in~\eqref{eq:dare},\eqref{eq:K}
        \State Determine policy $\pi_k (\cdot, \cdot) \gets \pi(\cdot, \cdot; k, \Khat_k, \beta)$, where $\pi$ is described by Algorithm~\ref{alg:policy}
        \State $(u_k, \xi_{k+1}) \gets \pi_k(x_k, \xi_k)$; record $\ut_k \gets \ut, \zeta_k \gets \zeta$, where $\ut,\zeta$ are the corresponding variables generated when executing the policy
		\State Apply control input $u_k$
		\EndFor
	\end{algorithmic}
	\label{alg:main}
\end{algorithm}

\renewcommand{\algorithmicrequire}{\textbf{Input:}}
\renewcommand{\algorithmicensure}{\textbf{Output:}}
\begin{algorithm}[!htbp]
    \caption{Safe policy \revise{$\pi(x,\xi; k, K, \beta)$}}\label{alg:policy}
    \begin{algorithmic}[1]
        \Require \revise{Arguments: system state $x$, policy internal state $\xi$; Parameters: step $k$, linear feedback gain $K$, exploratory noise decay rate $\beta$}
        \Ensure Control input $u$ and next policy internal state $\xi'$, \revise{i.e., $(u,\xi') = \pi(x,\xi; k, K, \beta)$}
		\If{$\xi > 0$}   \label{line:safestart}
		    \State $\ut \gets 0, \xi' \gets \xi - 1$
		\Else
		    \If{$\max\{\norm{K}, \norm{x}\} \geq \log k$}
		        \State $\ut \gets 0, \xi' \gets \left\lfloor \log k \right\rfloor$
		    \Else
		        \State $\ut \gets K x, \xi' \gets 0$     \label{line:safeend}
		    \EndIf
		\EndIf
		\State $u \gets \ut + (k+1)^{-\beta} \zeta$, where $\zeta \sim \gaussian{0}{I}$
        
    \end{algorithmic}
\end{algorithm}

This subsection describes the policy for determining the control input $u_k$. The first $n+p$ steps are a warm-up period where purely random inputs are injected. Afterwards, in each step, we inject a exploitation term $\ut$ plus a polynomial-decaying exploratory noise $(k+1)^{-\beta}\zeta$, where the decay rate $\beta\in(0,1/2)$ is a constant. The exploitation term $\ut$ is a modified version of the certainty equivalent control input $\Khat_k x_k$, and this modification, described in Algorithm~\ref{alg:policy}, is crucial to the safety of the learning process, which we will detail below.

In short, we stop injecting the exploitation input for $\lfloor \log k \rfloor + 1$ consecutive steps, if either the state norm $\norm{x_k}$ or the norm of the feedback gain $\| \Khat_k \|$ exceeds the threshold $\log k$. Recall from~\eqref{eq:pi} that we use $\xi_k$ to denote the internal state of the policy, and in Algorithm~\ref{alg:main}, $\xi_k$ is a counter that records how many steps are left in the ``non-action'' period. Essentially, we utilize the innate stability of the system to prevent the state from exploding catastrophically. \revise{This ``non-action'' mechanism is a critical feature of our control design, without which the controller learned from data may destabilize the system, albeit with a small probability, causing system failure in practice and forbidding the establishment of almost sure performance guarantees in theory. We provide an ablation study of this ``non-action'' mechanism in Section~\ref{sec:simulation}.}

We choose both the switching threshold and the length of the ``non-action'' period to be time-growing. The enlarging threshold corresponds to diminishing degree of conservativeness, which is essential for the policy performance to converge to the optimal performance. Meanwhile, the prolonging ``non-action'' period rules out the potential oscillation of state caused by the frequent switching of the controller~(see Appendix~\ref{sec:oscillation}
for an illustrative example of this oscillation phenomenon).
In particular, it can be shown that the suboptimality gap incurred by the switching strategy scales as $\O(tM\exp(-cM^2))$, where $M$ is the switching threshold, $t$ is the length of the ``non-action'' period, and $c$ is a system-dependent constant~(see Lemma~\ref{lemma:Jexp} in Appendix~\ref{sec:switch}). With both $M$ and $t$ growing as $\O(\log k)$, the contribution of our switching strategy to the overall suboptimality gap is merely $\Ot(1)$.



\subsection{Parameter inference}

The Markov parameters of the system described in~\eqref{eq:dynamics} are defined as
\begin{equation}
    H_\tau \triangleq A^\tau B, \quad \tau = 0,1,\ldots.
    \label{eq:markov}
\end{equation}
The Markov parameter sequence $\left\{ H_\tau \right\}_{\tau=0}^\infty$ can be interpreted as the impulse response of the system. It shall be noted that finite terms of the Markov parameter sequence would suffice \revise{to} characterize the system, since higher-order Markov parameters can be represented as the linear combination of lower-order Markov parameters using the characteristic polynomial of $A$ in combination with Cayley-Hamilton theorem. In particular, our inference algorithm estimates the first $n+p$ terms of the impulse response.
We denote our estimate of $H_\tau$ at step $k$ by $\Hhat_{k,\tau}$, whose expression is specified in line~\ref{line:Hktau} of Algorithm~\ref{alg:main}. Based on the cross-correlation between the current state $x_k$ and a past exploratory noise input $\zeta_{k-\tau-1}$, we can produce an unbiased estimate of $H_\tau$ at each step, and $\Hhat_{k,\tau}$ is a cumulative average of such unbiased estimates, whose almost-sure convergence to $H_\tau$ can be established through a law of large numbers for martingales~(see Appendix~\ref{sec:lln}).

\begin{remark}
From a computational perspective, line~\ref{line:Hktau} of Algorithm~\ref{alg:main} can be decomposed to enable efficient recursive updates. To see this, we can rearrange this expression as
\begin{equation*}
    \Hhat_{k,\tau} = \frac{1}{k - \tau}\sum_{i=\tau + 1}^k  (i-\tau)^\beta x_i\zeta_{i-\tau-1}^\top 
    - \sum_{t=0}^{\tau - 1} \Hhat_{k,t} \left[ \frac{1}{k - \tau} \sum_{i=\tau + 1}^k  (i-\tau)^\beta \ut_{i-t-1} \zeta_{i-\tau-1}^\top \right],
\end{equation*}
the weighted sum of $\tau + 1$ cumulative averages, each of which can be updated with a constant amount of computation in each step. Therefore, the time complexity of each iteration and the total memory consumption are constant. 
\end{remark}

From the inferred Markov parameters $\{ \Hhat_{k,\tau} \}_{\tau=0}^{n+p-1}$, we can reconstruct $\Ahat_k, \Bhat_k$ by fitting the data obtained from several rollouts on the virtual system described by $\{ \Hhat_{k,\tau} \}_{\tau=0}^{n+p-1}$. The procedure is detailed in Algorithm~\ref{alg:dd_lsq}.




\renewcommand{\algorithmicrequire}{\textbf{Input:}}
\renewcommand{\algorithmicensure}{\textbf{Output:}}

\begin{algorithm}[!htbp]
    \caption{Restoring system and input matrices from Markov parameters}
    \begin{algorithmic}[1]
        \Require Markov parameter estimate sequence $\Hhat_0, \Hhat_1, \ldots, \Hhat_{n+p-1}$, number of simulated trajectories $N$
        \Ensure System matrices estimate $\Ahat, \Bhat$
        \State Construct block Toeplitz matrix
		\begin{equation*}
		    \Toep \gets \begin{bmatrix}
                0 & 0 & 0 & \cdots & 0 \\
                \Hhat_0 & 0 & 0 & \cdots & 0 \\
                \Hhat_1 & \Hhat_0 & 0 & \cdots & 0 \\
                \vdots & \vdots & \vdots & \ddots & \vdots \\
                \Hhat_{n+p-1} & \Hhat_{n+p-2} & \Hhat_{n+p - 3} & \cdots & 0
            \end{bmatrix}
		\end{equation*}
		\State Choose $N$ independent random input trajectories, each $(n+p)$-long, denoted by
		\begin{equation*}
		    \ustep{i} \gets \begin{bmatrix}
                u_i^{(1)} & u_i^{(2)} & \cdots & u_i^{(N)} \end{bmatrix}, \quad i=0,1,\ldots,n+p-1.
		\end{equation*}
        \State Stack the inputs vertically and compute states:
        \begin{align*}
            &\Uvert \gets \begin{bmatrix}
            \ustep{0} \\ \ustep{1} \\ \vdots \\ \ustep{n+p-1}
            \end{bmatrix},
            \begin{bmatrix}
            \xstep{1} \\ \xstep{2} \\ \vdots \\ \xstep{n+p}
            \end{bmatrix} =: \Xvert \gets \Toep \Uvert
        \end{align*}
        \State Stack the inputs and states horizontally to form the following matrices:
        \begin{align*}
            &\Uhori \gets \begin{bmatrix}
            \ustep{0} & \ustep{1} & \cdots & \ustep{n+p-1}
            \end{bmatrix}, \\
            &\Xhorione \gets \begin{bmatrix}
            \xstep{1} & \xstep{2} & \cdots & \xstep{n+p}
            \end{bmatrix}, \\
            &\Xhorizero \gets \begin{bmatrix}
            \mathbf{0}_{n\times N} & \xstep{1} & \cdots & \xstep{n+p-1}
            \end{bmatrix}
        \end{align*}
        \State Compute estimate
        \begin{align*}
            \begin{bmatrix} \Bhat & \Ahat \end{bmatrix} \gets \Xhorione \begin{bmatrix} \Uhori \\ \Xhorizero \end{bmatrix}^\dag,
        \end{align*}

    \end{algorithmic}
    \label{alg:dd_lsq}
\end{algorithm}

It can be shown the procedure described in Algorithm~\ref{alg:dd_lsq} restores the actual system and input matrices as long as the Markov parameter estimates are accurate:

\begin{lemma}
When $\Hhat_\tau = H_\tau = A^\tau B$ for all $\tau = 0,1,\ldots, n+p-1$, and the matrix $\begin{bmatrix} \Uhori \\ \Xhorizero \end{bmatrix}$ defined in Algorithm~\ref{alg:dd_lsq} has full row rank, the result of Algorithm~\ref{alg:dd_lsq} satisfies $\Ahat = A, \Bhat = B$.
\label{lemma:ddlsq_consistency}
\end{lemma}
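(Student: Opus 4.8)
The plan is to show that the virtual trajectories generated inside Algorithm~\ref{alg:dd_lsq} exactly obey the noise-free true dynamics $x_{t+1} = A x_t + B u_t$ started from a zero initial state, and then to observe that the final least-squares step inverts this exact linear relation once the regressor has full row rank.

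First I would unwind the definition $\Xvert = \Toep \Uvert$ column by column. Fixing a trajectory index $j$ and reading off the block-Toeplitz structure together with the hypothesis $\Hhat_\tau = H_\tau = A^\tau B$, one obtains for each $t = 1, \ldots, n+p$ the convolution
\[
  x_t^{(j)} = \sum_{i=0}^{t-1} H_{t-1-i}\, u_i^{(j)} = \sum_{i=0}^{t-1} A^{t-1-i} B\, u_i^{(j)},
\]
with $x_0^{(j)} = 0$ coming from the zero first block-row of $\Toep$. A short telescoping computation then gives $A x_t^{(j)} + B u_t^{(j)} = \sum_{i=0}^{t} A^{t-i} B\, u_i^{(j)} = x_{t+1}^{(j)}$, so each simulated column is a genuine one-step transition of the true system for $t = 0, 1, \ldots, n+p-1$.

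Next I would read off the horizontal stacking defined in the algorithm: by construction the $(i,j)$-th column of $\left[\begin{smallmatrix} \Uhori \\ \Xhorizero \end{smallmatrix}\right]$ is $\left[\begin{smallmatrix} u_i^{(j)} \\ x_i^{(j)} \end{smallmatrix}\right]$, while the matching column of $\Xhorione$ is $x_{i+1}^{(j)}$, so the per-column recursion just established assembles into the single matrix identity
\[
  \Xhorione = \begin{bmatrix} B & A \end{bmatrix} \begin{bmatrix} \Uhori \\ \Xhorizero \end{bmatrix}.
\]
Finally, the full-row-rank hypothesis makes the Moore--Penrose pseudoinverse a right inverse, i.e.\ $\left[\begin{smallmatrix} \Uhori \\ \Xhorizero \end{smallmatrix}\right]\left[\begin{smallmatrix} \Uhori \\ \Xhorizero \end{smallmatrix}\right]^\dag = I$, whence
\[
  \begin{bmatrix} \Bhat & \Ahat \end{bmatrix} = \Xhorione \begin{bmatrix} \Uhori \\ \Xhorizero \end{bmatrix}^\dag = \begin{bmatrix} B & A \end{bmatrix},
\]
giving $\Ahat = A$ and $\Bhat = B$.

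The argument is in essence a verification, so I do not anticipate a deep obstacle. The only step requiring genuine care is the first one: correctly matching the block-Toeplitz convolution to the Markov-parameter recursion and handling the zero initial block, after which the exactness of the recovery follows immediately from the right-inverse property implied by the full-rank assumption.
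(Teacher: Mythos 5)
Your proof is correct and takes essentially the same route as the paper's: both establish the exact relation $\Xhorione = \begin{bmatrix} B & A \end{bmatrix} \begin{bmatrix} \Uhori \\ \Xhorizero \end{bmatrix}$ and then use the fact that full row rank makes the pseudoinverse a right inverse, so $\begin{bmatrix} \Bhat & \Ahat \end{bmatrix} = \Xhorione \begin{bmatrix} \Uhori \\ \Xhorizero \end{bmatrix}^\dag = \begin{bmatrix} B & A \end{bmatrix}$. The only difference is that you verify the intermediate relation column-by-column from the block-Toeplitz convolution, whereas the paper asserts it directly; this is a matter of level of detail, not of method.
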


\begin{proof}
When $\Hhat_\tau = A^\tau B$, we have $\Xhorione = A \Xhorizero + B \Uhori = \begin{bmatrix} B & A \end{bmatrix} \begin{bmatrix} \Uhori \\ \Xhorizero \end{bmatrix}$. Since $\begin{bmatrix} \Uhori \\ \Xhorizero \end{bmatrix}$ has full row rank, we have $\begin{bmatrix} \Uhori \\ \Xhorizero \end{bmatrix}\begin{bmatrix} \Uhori \\ \Xhorizero \end{bmatrix}^\dag = I$, and therefore $\begin{bmatrix} B & A \end{bmatrix} = \Xhorione \begin{bmatrix} \Uhori \\ \Xhorizero \end{bmatrix}^\dag = \begin{bmatrix} \Bhat & \Ahat \end{bmatrix}$.
\end{proof}

\begin{remark}
Lemma~\ref{lemma:ddlsq_consistency} guarantees the consistency of our controller, i.e., accurate Markov parameter estimates would generate an optimal exploitation input. The \revise{full-row-rank condition in Lemma~\ref{lemma:ddlsq_consistency}, naturally satisfied by randomly generated inputs}, guarantees that the inputs provide sufficient excitation to reconstruct system matrices from impulse responses. For the ease of analysis, it is also legitimate to use fixed, rather than random $\ustep{i}$'s for each time step $k$ of the outer loop of Algorithm~\ref{alg:main}, as long as the rank condition is satisfied.
\end{remark}
\section{Main results}\label{sec:results}

The main theoretical properties of our proposed LQR dual control scheme are stated as follows:

\begin{theorem}
Assuming $A$ is stable, the learning process described in Algorithm~\ref{alg:main} is \revise{bounded-cost safe} according to Definition~\ref{def:safety}.
\end{theorem}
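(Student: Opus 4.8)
The plan is to reduce bounded-cost safety to a uniform second-moment bound on the state. Fix a time $k$ and a noise realization up to that time; this freezes both the feedback gain $\Khat_k$ and the exploratory-noise magnitude $(k+1)^{-\beta}$ inside $\pi_k$, so that running $\pi_k$ indefinitely produces a time-homogeneous closed loop on the augmented state $(x_j,\xi_j)$ driven by the zero-mean, independent noise $w_j+(k+1)^{-\beta}B\zeta_j$ of bounded covariance. The input cost is controlled for free: whenever the exploitation term is active it equals $\Khat_k x_j$ and is applied only when $\norm{x_j}<\log k$ and $\norm{\Khat_k}<\log k$, so $\E[u_j^\top R u_j]$ is bounded uniformly in $j$. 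Hence $J^{\pi_k}<+\infty$ will follow once I show $\sup_j \E[\norm{x_j}^2]<\infty$, which I would establish by a Foster--Lyapunov drift argument.

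Since $\rho(A)<1$, I would let $S\succeq I$ solve the discrete Lyapunov equation $A^\top S A - S = -I$ and take $V(x)=x^\top S x$. The key step is to bound the one-step drift $\E[V(x_{j+1})\mid\F_j]$ in each mode. In feedback mode --- which by construction occurs only when $\norm{x_j}<\log k$ (and $\norm{\Khat_k}<\log k$) --- the dynamics are $x_{j+1}=(A+B\Khat_k)x_j+\text{noise}$, and since $x_j$ lives in a bounded ball I can bound $\E[V(x_{j+1})\mid\F_j]$ by a finite constant $C_B$ depending on $k$. In every other situation the applied mode is open-loop ($\ut=0$), where the cross terms vanish and $\E[V(x_{j+1})\mid\F_j]=V(x_j)-\norm{x_j}^2+c_0\le \gamma V(x_j)+c_0$ with $\gamma=1-1/\lambda_{\max}(S)\in[0,1)$ and $c_0=\Tr\!\big(S(W+(k+1)^{-2\beta}BB^\top)\big)$. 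Combining the two cases gives, for every $j$ and regardless of the counter $\xi_j$, $\E[V(x_{j+1})\mid\F_j]\le \gamma V(x_j)+(c_0+C_B)$.

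Taking expectations and iterating this geometric recursion yields $\limsup_j \E[V(x_j)]\le (c_0+C_B)/(1-\gamma)<\infty$, hence $\sup_j\E[\norm{x_j}^2]<\infty$ using $S\succeq I$. Together with the bounded input cost this gives $J^{\pi_k}<+\infty$, so $\pi_k$ is not destabilizing. The warm-up policies ($k<n+p$) apply only bounded exploratory noise with $\ut=0$ and are non-destabilizing by the same open-loop drift, and since $k$ and the realization were arbitrary the whole process is bounded-cost safe.

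I expect the main obstacle to lie in making the drift inequality hold \emph{uniformly over the counter-augmented state, for an arbitrary (possibly destabilizing) $\Khat_k$}. The delicate bookkeeping is to verify that the ``non-action'' counter can never keep the system in feedback mode while $\norm{x_j}\ge\log k$ --- that is, that every state outside the ball $\{\norm{x}<\log k\}$ is handled in the stable open-loop mode --- so that the only growth of $V$ is confined to a bounded region. Once this confinement is established, stability of $A$ supplies the contraction $\gamma<1$ and the rest is routine; it is precisely this confinement, rather than any property of $\Khat_k$, that rules out destabilizing behavior and delivers the ``for any realization'' guarantee.
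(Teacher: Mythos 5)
Your proposal is correct and follows essentially the same route as the paper: the paper's proof also reduces bounded-cost safety to a uniform bound on $\E[x_i^\top P x_i]$ via a Lyapunov drift argument (its Lemma~\ref{lemma:EV}), exploiting exactly the confinement you identify — that whenever $\norm{x_i}\geq \log k$ the policy applies the stable open-loop dynamics, so the Lyapunov function contracts outside a bounded ball and grows by at most a ($k$-dependent) constant inside it. The only cosmetic differences are that the paper folds the input cost into the weight $Q+\Khat_k^\top R\Khat_k$ rather than bounding it separately, and your explicit absorption of the exploratory noise into the disturbance is in fact slightly more careful than the paper's statement $u_i\in\{\Khat_k x_i,0\}$.
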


\begin{theorem}
Assuming $A$ is stable, and $0 < \beta < 1/2$, for $\Hhat_{k,\tau}$ computed in Algorithm~\ref{alg:main}, the following limit holds almost surely:
\begin{equation}
    \lim_{k\to\infty} \frac{\Hhat_{k,\tau} - H_\tau}{k^{-\gamma + \epsilon}} = 0,
    \label{eq:converge_H}
\end{equation}
where $\gamma = 1/2 - \beta > 0$, for any $\epsilon>0$ and any $\tau = 0,1, \ldots, n+p-1$.
\label{thm:converge_H}
\end{theorem}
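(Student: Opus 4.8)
The plan is to argue by induction on $\tau$, exploiting the recursive structure of the estimator in line~\ref{line:Hktau}. First I would unroll the dynamics~\eqref{eq:dynamics}: substituting $u_j = \ut_j + (j+1)^{-\beta}\zeta_j$ gives
\begin{equation*}
x_i = \sum_{t=0}^{i-1} H_t \ut_{i-1-t} + \sum_{t=0}^{i-1} H_t (i-t)^{-\beta}\zeta_{i-1-t} + A^i x_0 + \sum_{j=0}^{i-1}A^{i-1-j}w_j,
\end{equation*}
an exploitation part, an exploration part, and a free-response plus process-noise part. Writing $\check x_{i,\tau} := x_i - \sum_{t=0}^{\tau-1} H_t \ut_{i-1-t}$ for the correction with the \emph{true} Markov parameters and correlating with $(i-\tau)^\beta\zeta_{i-\tau-1}^\top$, one sees that the exploration term at lag $t=\tau$ contributes the signal $H_\tau\zeta_{i-\tau-1}\zeta_{i-\tau-1}^\top$ with conditional mean $H_\tau$, while every other term in $\check x_{i,\tau}$ is uncorrelated with $\zeta_{i-\tau-1}$ (the exploitation terms with $t\ge\tau$, the exploration terms at other lags, and the process noise all depend on noises other than $\zeta_{i-\tau-1}$). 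The exploitation terms with $t\le\tau-1$, which \emph{are} correlated with $\zeta_{i-\tau-1}$, are exactly those removed by the correction. This produces the decomposition
\begin{equation*}
\Hhat_{k,\tau} - H_\tau = \frac{1}{k-\tau}\sum_{i=\tau+1}^{k}\Big[(i-\tau)^\beta \check x_{i,\tau}\zeta_{i-\tau-1}^\top - H_\tau\Big] + \sum_{t=0}^{\tau-1}(H_t - \Hhat_{k,t}) B_{k,t},
\end{equation*}
where $B_{k,t} := \frac{1}{k-\tau}\sum_{i=\tau+1}^k (i-\tau)^\beta \ut_{i-1-t}\zeta_{i-\tau-1}^\top$; I call the first sum the noise term and the second the propagated error.

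Next I would establish an almost-sure polylogarithmic state bound $\norm{x_k} = \O(\mathrm{polylog}(k))$, which is where the safe policy of Algorithm~\ref{alg:policy} is indispensable. Whenever $\norm{x_k}$ or $\norm{\Khat_k}$ exceeds $\log k$ the exploitation input is switched off, so both the applied gain and the state excursions are capped; combined with $\rho(A)<1$ and the Gaussian tail estimate $\norm{w_k},\norm{\zeta_k} = \O(\sqrt{\log k})$ a.s.\ (Borel--Cantelli), the stability of $A$ contracts any transient within each $\lfloor\log k\rfloor+1$-step non-action window. With this bound in hand, each summand of the noise term has conditional second moment $\Ot((i-\tau)^{2\beta})$.

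The noise term would then be handled by a martingale strong law. The subtlety is that $\zeta_{i-\tau-1}$ is a \emph{past} noise relative to time $i$, and $\check x_{i,\tau}$ depends on the intervening exploration noises $\zeta_{i-\tau},\dots,\zeta_{i-1}$, so the summand is not a forward martingale difference in $i$. I would resolve this by splitting the sum over the $\tau+1$ residue classes of $i \bmod (\tau+1)$: within a class the correlated $\zeta$-indices are spaced by more than $\tau$, so along the coarser filtration generated up to the relevant noise index the part of $\check x_{i,\tau}$ orthogonal to $\zeta_{i-\tau-1}$ is already determined while $\zeta_{i-\tau-1}$ is conditionally fresh, making each class a genuine martingale-difference sequence. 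Applying the martingale law of large numbers (Appendix~\ref{sec:lln}) with normalizer $b_k = (k-\tau)^{1-\gamma+\epsilon}$ and checking $\sum_k \Ot(k^{2\beta})/k^{2(1-\gamma+\epsilon)} = \sum_k \Ot(k^{-1-2\epsilon}) < \infty$ --- the identity $\gamma = 1/2-\beta$ entering decisively here --- shows the noise term is $o(k^{-\gamma+\epsilon})$ almost surely.

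For the induction I would bound the propagated error. Each $B_{k,t}$ splits into a part correlated with $\zeta_{i-\tau-1}$, whose running average stays $\O(\mathrm{polylog}(k))$ because the applied gain is capped at $\log k$, plus an uncorrelated martingale part of order $\Ot(k^{-\gamma})$; hence $\norm{B_{k,t}} = \O(\mathrm{polylog}(k))$. Invoking the induction hypothesis $\norm{\Hhat_{k,t} - H_t} = o(k^{-\gamma+\epsilon/2})$ for $t<\tau$ then gives $\sum_{t=0}^{\tau-1}(H_t - \Hhat_{k,t})B_{k,t} = o(k^{-\gamma+\epsilon})$, closing the induction; the base case $\tau=0$ has an empty correction and follows from the preceding two steps alone. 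The main obstacle I anticipate is the interplay between the non-standard martingale structure of the noise term and the recursive error propagation: making the residue-class martingale decomposition rigorous while carrying the polylogarithmic factors cleanly through the induction is the delicate part, whereas the state bound and the variance summability are comparatively routine.
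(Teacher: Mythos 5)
Your proposal follows essentially the same route as the paper's own proof: unrolling the dynamics into static form, splitting $\Hhat_{k,\tau}-H_\tau$ into a noise term (corrected with the \emph{true} Markov parameters) plus a propagated-error term --- your decomposition is exactly the paper's~\eqref{eq:Hktau_expanded} --- handling the noise term with the martingale strong law of Appendix~\ref{sec:lln}, and closing an induction on $\tau$. The comparison is still instructive, because in the two places where your execution departs from the paper's written argument, yours is the more careful one. First, for $\tau\geq 1$ the paper dismisses the noise term as ``completely similar'' to the $\tau=0$ case, glossing over the adaptedness obstruction you raise: the summand pairing $x_{i+\tau+1}$ with $\zeta_i$ is $\F_{i+\tau}$-measurable but centered only conditionally on $\F_{i-1}$, and the intermediate $\sigma$-algebras already contain $\zeta_i$, so the full sum is not a martingale with respect to any shift of $\{\F_k\}$; your residue-class splitting modulo $\tau+1$ is a genuine fix, since with spacing $\tau+1$ one has $\F_{i_j+\tau}=\F_{i_{j+1}-1}$, making each class a bona fide martingale-difference sequence. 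Second, the paper asserts that $B_{k,t}=\frac{1}{k-\tau}\sum_i(i+1)^\beta\ut_{i+\tau-t}\zeta_i^\top$ is itself a vanishing martingale average of order $\C(\beta-\frac12)$, invoking~\eqref{eq:u_zeta} for summands written as $\ut_{k-t}\zeta_k^\top$; but in the actual sum the input index $i+\tau-t$ \emph{exceeds} the noise index $i$, so~\eqref{eq:u_zeta} gives no cancellation --- indeed $\E\left[(i+1)^\beta\ut_{i+\tau-t}\zeta_i^\top\right]$ is approximately $\Khat_{i+\tau-t}H_{\tau-t-1}\neq 0$ through the feedback path, and $B_{k,t}$ does not vanish. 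Your weaker claim, that $B_{k,t}$ is merely polylog-bounded (a correlated part controlled by the $\log k$ caps on gain and state, plus a vanishing uncorrelated martingale part), is the correct statement, and as you observe it suffices because $B_{k,t}$ multiplies the $o(k^{-\gamma+\epsilon})$ error supplied by the induction hypothesis. One point to tighten in your plan: the conditional second-moment bound $\Ot(i^{2\beta})$ for the noise-term summands does not follow directly from an almost-sure polylog bound on $\norm{x_k}$, since the a.s. constant is random and not measurable with respect to the conditioning $\sigma$-algebra; either apply the a.s. bound only to the state at the conditioning time (which is measurable there) and propagate conditional moments over the remaining $\tau+1$ steps, or follow the paper and work with unconditional fourth moments (Lemma~\ref{lemma:Ex4}) combined with Cauchy--Schwarz, which plugs directly into the hypothesis of Lemma~\ref{lemma:martingale}.
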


\revise{
    \begin{remark}
        Theorem~\ref{thm:converge_H} states that our estimate of the Markov parameters converge at the order $\O\left( k^{-\gamma + \epsilon} \right)$. As a corollary, our estimate of the system and input matrices $A, B$ also converge at $\O\left( k^{-\gamma + \epsilon} \right)$ (see Appendix~\ref{sec:proof3}). This convergence rate coincides with the one guaranteed by the more commonly used least-squares method, for which one can establish the convergence of estimates of $A,B$ at $\O\left( k^{-\gamma + \epsilon} \right)$ using concentration bounds for martingale least-squares (e.g., Lemma E.1 in~\cite{simchowitz20}). Therefore, our parameter inference algorithm based on cross-correlation can be viewed as a competitive alternative to the least-squares estimator. One feature of our cross-correlation approach compared to the least-squares approach, however, is that the estimation of Markov parameters can be easily extended to the partially observable LQR setting like the one considered in~\cite{zheng20lqg}. Also, the convergence of our cross-correlation approach is not based on the sub-Gaussianity of the process noise (see Appendix~\ref{sec:proof2}), which allows straightforward generalization to long-tailed noise models.
    \end{remark}
}


\begin{theorem}
Assuming $A$ is stable, and $0 < \beta < 1/2$, let $\pi_k$ be the control policy used at step $k$ in Algorithm~\ref{alg:main},
then with $J^{\pi_k}$ and $J^*$ defined in~\eqref{eq:J_pi}, \eqref{eq:J_opt} respectively, for any $\epsilon > 0$, the following limit holds almost surely:
\begin{equation}
    \lim_{k\to\infty} \frac{J^{\pi_k} - J^*}{k^{-\min(2\beta,2\gamma) + \epsilon}} = 0,
    \label{eq:converge_J}
\end{equation}
where $\gamma = 1/2 - \beta > 0$.
\label{thm:converge_J}
\end{theorem}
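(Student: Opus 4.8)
The plan is to treat $J^{\pi_k} - J^*$ as a deterministic function of the (random) estimated gain $\Khat_k$ and the step index $k$, and then to bound it by three separately controlled contributions. First I would fix a realization of the noise process outside the null set on which Theorem~\ref{thm:converge_H} fails; on such a trajectory $\Khat_k \to K^*$, and by the argument of Appendix~\ref{sec:proof3} the matrix estimates obey $\norm{\Ahat_k - A}, \norm{\Bhat_k - B} = \O(k^{-\gamma+\epsilon})$. Continuity of the Riccati solution~\eqref{eq:dare} and of the gain formula~\eqref{eq:K} then propagates this rate to $\norm{\Khat_k - K^*} = \O(k^{-\gamma+\epsilon})$ almost surely. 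The remaining work is purely deterministic: bound $J^{\pi_k} - J^*$ in terms of $\norm{\Khat_k - K^*}$, the exploration magnitude $(k+1)^{-\beta}$, and the switching parameters $M = t = \O(\log k)$.

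For the decomposition I would write $J^{\pi_k} - J^* = (J^{\pi_k} - J(\Khat_k)) + (J(\Khat_k) - J^*)$, where $J(K) = \Tr(P_K W)$ denotes the cost of the pure linear feedback $u = Kx$, with $P_K$ solving the associated Lyapunov equation. The second term, the gain suboptimality, is the crucial one: since $K \mapsto J(K)$ is twice continuously differentiable near $K^*$ and $K^*$ is its minimizer, the gradient vanishes at $K^*$, so a second-order Taylor expansion gives $J(\Khat_k) - J^* = \O(\norm{\Khat_k - K^*}^2) = \O(k^{-2\gamma + \epsilon})$. This quadratic (rather than linear) dependence is exactly what converts the $k^{-\gamma}$ estimation rate into the $k^{-2\gamma}$ half of the claimed exponent.

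The first term collects the overhead of exploration and switching relative to pure feedback, and I would split it further. The injected noise $(k+1)^{-\beta}\zeta$ enters the closed loop linearly and with zero mean, so its contribution to the stationary cost is linear in its covariance $(k+1)^{-2\beta} I$, yielding an overhead of $\O(k^{-2\beta})$, the other half of the exponent. The switching overhead is handled by Lemma~\ref{lemma:Jexp}, which bounds the suboptimality incurred by the non-action mechanism by $\O(tM\exp(-cM^2))$; with $t = \lfloor \log k\rfloor$ and $M = \log k$ this is $\O((\log k)^2 \exp(-c(\log k)^2))$, which decays faster than any polynomial. Summing the three bounds gives $J^{\pi_k} - J^* = \O(k^{-2\gamma+\epsilon}) + \O(k^{-2\beta}) + \O((\log k)^2\exp(-c(\log k)^2))$, and since the last term is super-polynomially small the right-hand side is $\O(k^{-\min(2\beta,2\gamma)+\epsilon})$; dividing by $k^{-\min(2\beta,2\gamma)+\epsilon}$ and letting $k\to\infty$ yields the stated almost-sure limit.

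I expect the main obstacle to be the switching overhead, because the non-action mechanism makes the closed loop a state-dependent switched, hence nonlinear, system, so $J^{\pi_k}$ is not merely the cost of a linear system and cannot be read off from a Lyapunov equation. Controlling it requires showing that, once $M = \log k$ is large relative to $\norm{K^*}$ and the stationary state magnitude, the event triggering non-action is exponentially rare under the stationary distribution and contributes only an exponentially small excess cost, which is precisely the content of Lemma~\ref{lemma:Jexp}. A secondary subtlety is uniformity: the second-order bound on $J(\Khat_k)-J^*$ must hold with a constant that is uniform once $\Khat_k$ enters a fixed neighborhood of $K^*$, which follows from continuity of the Hessian of $J$ together with the almost-sure convergence $\Khat_k \to K^*$.
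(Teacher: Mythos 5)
Your proposal is correct and follows essentially the same route as the paper's proof: the paper bounds $J^{\pi_k}-J^*$ by exactly your three effects — gain suboptimality quadratic in $\|\Khat_k - K^*\|$ (via Lemma~\ref{lemma:riccati_sensitivity}, which is the rigorous form of your vanishing-gradient Taylor argument), exploration noise of order $k^{-2\beta}$, and switching overhead via Lemma~\ref{lemma:Jexp} — after propagating the $\C(-\gamma)$ rate of Theorem~\ref{thm:converge_H} to $\Khat_k$ just as you describe. The only cosmetic difference is that the paper makes your deferred ``further splitting'' explicit as a four-term telescope through the intermediate policies $\pihat_k$, $\pit_k$ and the inflated noise covariance $W+(k+1)^{-2\beta}I$, which cleanly separates the switching comparison (done at a fixed noise level) from the noise-inflation and input-cost comparisons.
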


Due to space limits, all the proofs are deferred to the appendix.

\begin{remark}
According to Theorem~\ref{thm:converge_H}, the convergence rate $\gamma$ is maximized when $\beta \to 0^+$.
However, the exploration term $k^{-\beta}$ does not decay in this case, and the control performance $J^{\pi_k}$ will not converge to the optimal performance $J^*$. To achieve the fastest convergence of $J^{\pi_k}$, we need to choose the decay rate of the exploration term to be $\beta = 1/4$, which maximizes $\min(2\beta, 2\gamma) = \min(2\beta, 1-2\beta)$. \revise{When $\beta = 1/4$, we have both parameter estimation errors and policy suboptimality gaps scaling at $\Ot(1/\sqrt{T})$, which matches the asymptotic rates of corresponding components in the (with-high-probability) regret-optimal algorithm proposed in~\cite{simchowitz20}. However, due to the challenging nature of the analysis of the nonlinear closed-loop system under our safe control scheme, the exact regret of our scheme is still under investigation.}
\end{remark}

\section{Simulation}\label{sec:simulation}

In this section, the performance of our proposed algorithm is evaluated using Tennessee Eastman Process~(TEP), an industrial process example for benchmarking linear controllers~\citep{tep}. We used a simplified version of TEP from~\cite{watermark}, where the state and input dimensions are $n = 8, p = 4$, and the spectral radius of the system matrix is $\rho(A) \approx 0.96$. We assume $Q,R,$ are identity matrices, \revise{and $W,X_0$ are also identity matrices, i.e., the process noise is i.i.d. standard Gaussian}. We choose $N = 50$ (in Algorithm~\ref{alg:dd_lsq}) for all the experiments.
In the practical implementation, the estimated system matrices $\Ahat_k, \Bhat_k$ and the feedback gain $\Khat_k$ are updated only at steps $\lfloor 10^{n/2} \rfloor (n\in\mathbb{N}) $ to speed up the computation.

To illustrate the impact of the parameter $\beta$ on the convergence of the algorithm, we perform 100 independent experiments for each of $\beta \in \left\{ 0, 1/4, 1/2 \right\}$, with $10^8$ steps in each experiment. 
Fig.~\ref{fig:converge_param} shows the error of the estimated system and input matrices, i.e., $\norm{\Ahat_k - A}$ and $\norm{\Bhat_k - B}$ against the time $k$ for different values of $\beta$.

\begin{figure}[!htbp]
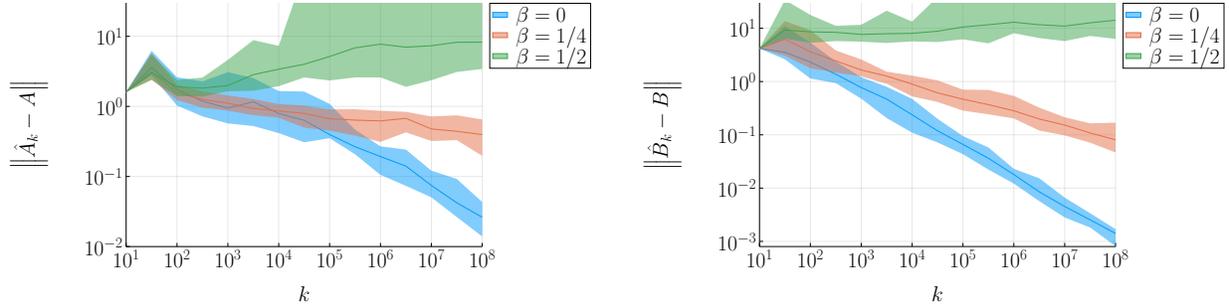

    \centering
    \begin{subfigure}[b]{0.49\textwidth}
    \includegraphics[width=\textwidth]{figures/Ahat.tikz}
    \end{subfigure} \hfill
    \begin{subfigure}[b]{0.49\textwidth}
    \includegraphics[width=\textwidth]{figures/Bhat.tikz}
    \end{subfigure}
    \caption{Error of estimated system matrices for different $\beta$ against time $k$. The solid lines are the median among the experiments, and the shades represent the range among the experiments.}
    \label{fig:converge_param}
\end{figure}

From Fig.~\ref{fig:converge_param}, one can see when $\beta$ is $0$ or $1/4$, the estimation error of the system matrices converges to zero as time $k$ goes to infinity, and the convergence approximately follows a power law. Furthermore, the convergence speed of the estimation error is significantly faster when $\beta = 0$. Meanwhile, when $\beta = 1/2$, the estimate error diverges in some of the experiments. The above observations are consistent with the theoretical result in Theorem~\ref{thm:converge_H}, where it is stated that the parameter estimates converge with rate $\O\left( k^{-\gamma + \epsilon} \right)$, with $\gamma = 1/2 - \beta$, i.e., convergence is only guaranteed with $\beta < 1/2$ and is faster when $\beta$ is smaller.


Now we consider the performance of policies with different values of $\beta$.
To quantify the policy performance, one shall in theory compute $J^{\pi_k}$ as defined in~\eqref{eq:J_pi}. However, the policies $\pi_k$ are nonlinear, rendering it very difficult, if not impossible, to compute $J^{\pi_k}$ analytically. As a surrogate approach for evaluating a policy $\pi_k$, we define the empirical cost $\hat{J}^{\pi_k}$ as
\begin{equation*}
    \hat{J}^{\pi_k} = \frac{1}{N}\frac{1}{T}\sum_{i=1}^N\sum_{t=0}^{T-1} \left( x_t^{(i)} \right)^\top Q x_t^{(i)} + \left( u_t^{(i)} \right)^\top R u_t^{(i)},
\end{equation*}
where $x_t^{(i)}, u_t^{(i)}$ are states and inputs collected from the closed-loop system under $\pi_k$ in $N$ independent $T$-long sample paths indexed by $i$. In our simulations, we take $T = 10000, N = 10$ to evaluate each stabilizing policy, which yields consistent results in practice. Fig.~\ref{fig:J} shows the empirical performance of controllers with different values of $\beta$ against time $k$.

\begin{figure}[!htbp]
    \centering
    \includegraphics[width=0.6\textwidth]{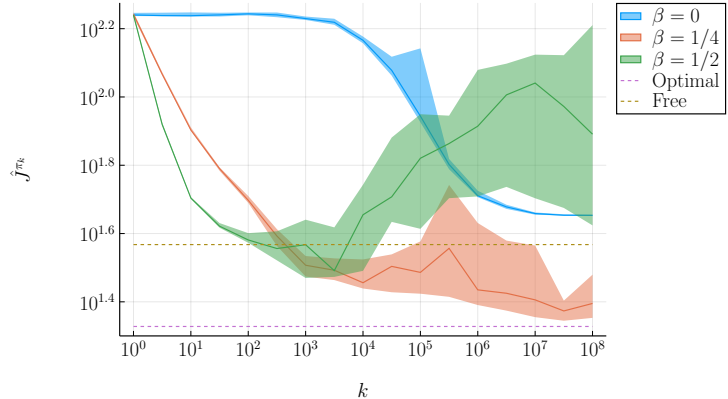}
    \caption{The empirical performance of controllers for different $\beta$ against time $k$. For each policy $\pi_k$, the empirical cost $\hat{J}^{\pi_k}$ is obtained by approximating the actual cost with random sampling from the closed-loop system. The solid lines are the median among the experiments, and the shades represent the interval between the first and third quantiles among the experiments. The purple and brown dashed lines represent the analytical optimal cost, and the analytical cost of the control-free system, respectively.}
    \label{fig:J}
\end{figure}

From Fig.~\ref{fig:J}, one can observe that among our choices of $\beta$, only $\beta = 1/4$ drives the controller toward the optimal one. For $\beta = 0$, although the parameter estimates converge the fastest as discussed above, the performance of the resulting closed-loop system is even worse than the free system. This is because the exploration term $(k+1)^{-\beta}\zeta_k$ in the control input does not decay, and the resulting controller is a noisy one. Meanwhile, for $\beta = 1/2$, diverging parameter estimates would lead to ill-performing controllers. The observations are consistent with the theoretical conclusion indicated by Theorem~\ref{thm:converge_J} that $\beta = 1/4$ corresponds to the optimal trade-off between exploration and exploitation.

\revise{
    Finally, we compare our algorithm with the naive centainty equivalence algorithm, which applies $u_k = \Khat_k x_k+k^{-\beta}\zeta_k$ at each step. We invoke both algorithms with the ``optimal'' parameter $\beta = 1/4$, and perform 20 independent experiments for each algorithm. We consider the system parameter estimation error and the policy suboptimality gap for each algorithm, as we did in Fig.~\ref{fig:converge_param} and Fig.~\ref{fig:J}. The results of this comparison experiment are presented in Fig.~\ref{fig:compare}. In a proportion of the experiments (actually 7 out of 20), certainty equivalence causes the estimation error and the policy suboptimality gap to diverge, while our algorithm ensures convergence in each experiment.
    Although existing works (e.g.~\cite{simchowitz20}) contain methods for decreasing the failure probability with a warm-up period for estimation, it should be noted that the failure probability is always nonzero as long as a linear feedback policy with learned gain is deployed.
    Even compared to the case where the certainty equivalence algorithm converges, our algorithm still delivers slightly lower estimation error and significantly better policy performance, which indicates that the conservativeness caused by our switching mechanism does not harm the long-term performance.
}

    \begin{figure}[!htbp]
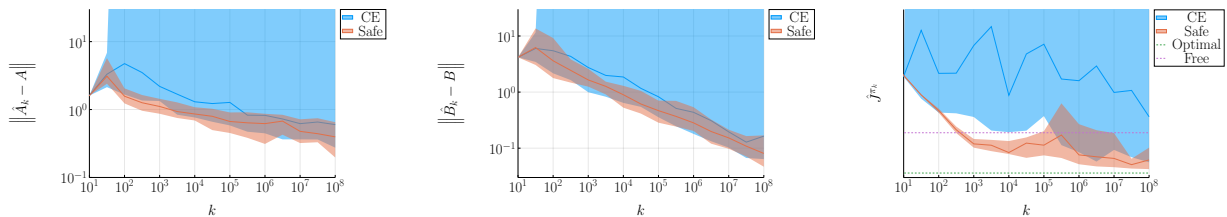

        \centering
        \begin{subfigure}[b]{0.325\textwidth}
        \includegraphics[width=\textwidth]{figures/Ahat_ce.tikz}
        \end{subfigure} \hfill
        \begin{subfigure}[b]{0.325\textwidth}
        \includegraphics[width=\textwidth]{figures/Bhat_ce.tikz}
        \end{subfigure} \hfill
        \begin{subfigure}[b]{0.31\textwidth}
        \includegraphics[width=\textwidth]{figures/J_ce.tikz}
        \end{subfigure}
        \caption{\revise{Comparison between our proposed algorithm (Safe) and the certainty equivalence algorithm (CE), in terms of estimation error and policy performance. In a proportion of the experiments, CE causes the estimation error and the policy suboptimality gap to diverge, while our algorithm ensures convergence in each experiment.}}
        \label{fig:compare}
    \end{figure}
\section{Related Works}\label{sec:related}

The concept of dual control was initiated by~\cite{dual}, and since then sustained research efforts have been devoted to learning to make decisions in unknown environments.
\revise{Classical literature on adaptive control, e.g., \cite{aastrom1973self,lai1986extended,guo1988convergence,guo1996self}, have addressed the dual control of linear systems described by the ARMAX model with the objective of tracking a reference signal. This objective do not consider the cost of exerting control input and is less involved than the LQR setting.}
For the rest of this section, we restrict our focus to the LQR of unknown linear systems described by the state-space model.

In the offline learning regime, system identification and controller synthesis based on the estimated system are performed in separate stages. 
The study of linear system identification has been developed for decades, with classical asymptotic guarantees summarized in~\cite{ljung}, and there has been a recent interest in non-asymptotic system identification based on a finite-size dataset. One can generate the dataset either by injecting random inputs to the system~\citep{oymak18,zheng20multiple}, or using active learning techniques to achieve optimal system-dependent finite-time rate~\citep{wagenmaker20}. For a summary of recent results on non-asymptotic linear system identification, please refer to~\cite{zheng20multiple}.

A natural step following system identification is to synthesize a controller based on the identification result and analyze the resulting closed-loop system's performance. The Coarse-ID control framework~\citep{dean17} adopts robust control techniques to derive controllers taking identification error into account, resulting in high-probability bounds on the control performance suboptimality gap. This framework has been applied to LQR~\citep{dean17}, SISO system with output feedback~\citep{boczar18}, and LQG~\citep{zheng20lqg}.
As an alternative approach, the certainty equivalence framework synthesizes the controller by treating the identified system as the truth~\citep{mania19,tsiamis20}, which is shown to achieve a faster convergence rate of control performance, at the cost of higher sensitivity to poorly estimated system parameters. At the intersection of the above two methods, an optimistic robust framework was proposed to achieve a combination of fast convergence rate and robustness~\citep{umenberger20}.
\revise{It should be pointed out the offline algorithms in the above works and the online setting we consider in the present work are not mutually exclusive but complementary to each other: online algorithms can continually refine the controller obtained by offline methods without interrupting normal system operation.}

The problem of designing controllers that improve over time has been studied in the context of online LQR and regret analysis.
The online LQR setting was initially proposed by~\cite{abbasi11}, and since then, Optimism in the Face of Uncertainty~(OFU), Thompson Sampling~(TS) and $\varepsilon$-greedy exploration have been applied to solve the online LQR problem. \cite{abbasi11} applied OFU principle to  prove a theoretical $\Ot(\sqrt{T})$ regret upper bound, but their method is computationally intractable. \cite{cohen19} proposed a tractable algorithm for OFU based on semidefinite programming to realize $\Ot(\sqrt{T})$ regret. TS has been shown to achieve $\Ot(\sqrt{T})$ (frequentist) regret for scalar system~\citep{abeille2018improved}, and $\Ot(\sqrt{T})$ expected regret in a Beyesian setting~\citep{ouyang2017learning}.
Recently, \revise{\cite{input_perturbation}} and \cite{mania19} prove that randomized Certainty Equivalent~(CE) control with $\varepsilon$-greedy exploration can also achieve $\Ot(\sqrt{T})$ regret, and \cite{simchowitz20} prove that $\Ot(\sqrt{T})$ regret is indeed the fundamental limit for the general online LQR setting. Under further assumptions, e.g., partially known system parameters, logarithm regret bound may be achieved~\citep{cassel20}. The aforementioned results and the fundamental limit hold either with high probability or in expectation, and lack almost sure guarantee.
\revise{\cite{on_adaptive} prove that $\Ot(\sqrt{T})$ regret for both TS and CE and hold almost surely, but under the restrictive assumption that the closed-loop system remains stable when the policy is being employed.}
Our work attempts to fill in the gap by performing an almost sure analysis for online LQR \revise{without assumptions on the system other than the existence a known initial stabilizer}.
\section{Conclusion}\label{sec:conclusion}

In this paper, we propose a safe LQR dual control scheme based on a switching, which guarantees almost sure convergence to zero of both parameter inference error and suboptimality gap of control performance. The convergence rate of inference error is $\O(T^{-1/4+\epsilon})$, while that of the suboptimality gap is $\O(T^{-1/2+\epsilon})$, both of which match the known optimal rates proved in the non-asymptotic setting. For future works, we plan to extend the notion of regret to our almost-sure convergence setting, and formally establish the fundamental limits for this setting. It would also be interesting to investigate the exact convergence rates, especially dimension dependence, of our parameter inference scheme and the standard least-squares procedure.

\bibliography{ref.bib}
\bibliographystyle{ieeetr}
\newpage
\appendix
\section{Proof of main results}
\subsection{Preliminaries for proofs}
\subsubsection{Notations}\label{sec:notations}
We will use the following notations throughout the proofs:

\bgroup
\def\arraystretch{1.5}
\begin{tabular}{p{2in}p{3.25in}}
\textbf{Notation} & \textbf{Definition} \\
$X^\top$ & transpose of matrix $X$ \\
$\norm{X}, \norm{x}$ & norm of matrix $X$ or vector $x$, 2-norm by default \\
$\norm{X}_F$ & Frobenius norm of matrix $X$ \\
$\Tr(X)$ & trace of matrix $X$ \\
$\rho(X)$ & spectral radius of matrix $X$ \\
$X \succ 0$ & matrix $X$ is positive definite \\
$X \prec Y$ & for matrix $X,Y$, $Y - X\succ 0$ \\
$\lambda_{\min} (X)$ & minimum eigenvalue of matrix $X$ \\
$\kappa(X)$ & ratio between maximum and minimum eigenvalues for matrix $X \succ 0$, i.e., $\kappa(X) = \norm{X} / \lambda_{\min}(X)$ \\
$\vect(X)$ & vectorization of matrix $X$ \\
$X \otimes Y$ & Kronecker product of matrices $X$ and $Y$ \\
$\mathbf{1}_S$ & indicator function of set $S$ \\
$f(x) \sim \O(g(x))$ & there exists $M>0$, such that $\left| f(k) \right| \leq M\times g(k)$ for all $k \in \mathbb{N}$ \\
$f(x) \sim \Ot(g(x))$ & there exists $M>0$ and $n\in\mathbb{N}$, such that $\left| f(k) \right| \leq M\times g(k) \times \log(k)^n$ for all $k \in \mathbb{N}$
\end{tabular}
\egroup

In addition, we define the following short-hand notation for characterizing the almost-sure asymptotic convergence or growth rate of random processes:
for a random variable (vector, or matrix) sequence $\left\{ x_k \right\}$, we denote that $x_k \sim \C(\alpha)$ if for all $\epsilon > 0$ and almost all realizations of randomness, we have $x_k \sim \O\left( k^{\alpha + \epsilon} \right)$, i.e.,
$
    \lim _{k \rightarrow \infty} \left\|x_{k}\right\| / {k^{\alpha+\epsilon}} \stackrel{a . s .}{=} 0.
$
The following lemma establishes some basic properties of $\C(\alpha)$ functions:

\begin{lemma} 
Assume $x_k, y_k$ are random variable (vector, or matrix) sequences with proper dimensions, then the following properties hold: \begin{enumerate}
    \item If $x_k \sim \C(\alpha), y_k \sim \C(\beta)$, then $x_k + y_k \sim \C\left( \max(\alpha, \beta) \right)$.
    \item If $x_k \sim \C(\alpha), y_k \sim \C(\beta)$, then $x_k y_k \sim \C(\alpha + \beta)$.
    \item If $f$ is a function differentiable at 0 and $x_k \sim C(\alpha), \alpha < 0$, then $f\left( x_k \right) - f(0) \sim C(\alpha)$.
\end{enumerate}
\label{lemma:function_class}
\end{lemma}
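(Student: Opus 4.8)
The plan is to reduce each of the three claims to the defining property of $\C(\cdot)$: recall that $x_k \sim \C(\alpha)$ means $\lim_{k\to\infty}\norm{x_k}/k^{\alpha+\epsilon} = 0$ almost surely for every $\epsilon > 0$. In each case I would fix an arbitrary $\epsilon > 0$, pass to the intersection of the finitely many probability-one events on which the relevant defining limits hold — this intersection still has probability one — and then establish the desired limit by a deterministic estimate valid on that event. Since $\epsilon$ is arbitrary, the conclusion follows. The only quantifier subtlety is that the null set may depend on $\epsilon$, but because each claim is itself of the form ``for every $\epsilon$, almost surely $\ldots$'', this causes no difficulty.

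For the sum (item~1), set $\mu = \max(\alpha,\beta)$ and, given $\epsilon$, work on the probability-one event where $\norm{x_k}/k^{\alpha+\epsilon}\to 0$ and $\norm{y_k}/k^{\beta+\epsilon}\to 0$. There the triangle inequality gives
\[
\frac{\norm{x_k+y_k}}{k^{\mu+\epsilon}} \le \frac{\norm{x_k}}{k^{\alpha+\epsilon}}\,k^{\alpha-\mu} + \frac{\norm{y_k}}{k^{\beta+\epsilon}}\,k^{\beta-\mu},
\]
and since $\alpha-\mu\le 0$ and $\beta-\mu\le 0$ the factors $k^{\alpha-\mu},k^{\beta-\mu}$ are bounded by one, so the right-hand side tends to zero. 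For the product (item~2) I would instead split the budget, applying the definitions with $\epsilon/2$ in place of $\epsilon$, and use submultiplicativity of the norm:
\[
\frac{\norm{x_k y_k}}{k^{\alpha+\beta+\epsilon}} \le \frac{\norm{x_k}}{k^{\alpha+\epsilon/2}}\cdot\frac{\norm{y_k}}{k^{\beta+\epsilon/2}} \to 0.
\]

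The last item carries the real content. Since $\alpha<0$, choosing $\epsilon$ small enough that $\alpha+\epsilon<0$ shows $\norm{x_k}=\bigl(\norm{x_k}/k^{\alpha+\epsilon}\bigr)k^{\alpha+\epsilon}\to 0$ almost surely, so almost every trajectory of $x_k$ eventually enters any neighborhood of the origin. Differentiability of $f$ at $0$ supplies the local linear bound I need: for every $\delta>0$ there is $r>0$ with $\norm{f(x)-f(0)-Df(0)x}\le \delta\norm{x}$ whenever $\norm{x}<r$, hence $\norm{f(x)-f(0)}\le(\norm{Df(0)}+\delta)\norm{x}$ on that ball. On the probability-one event where $x_k\to 0$, eventually $\norm{x_k}<r$, so $\norm{f(x_k)-f(0)}\le C\norm{x_k}$ with $C=\norm{Df(0)}+\delta$ a fixed constant; dividing by $k^{\alpha+\epsilon}$ and letting $k\to\infty$ gives $f(x_k)-f(0)\sim\C(\alpha)$. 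The main obstacle — really the only nonroutine step — is recognizing that differentiability at a single point, rather than global Lipschitzness or $C^1$ smoothness, already yields the required rate-preserving bound, and that the hypothesis $\alpha<0$ is exactly what guarantees $x_k\to 0$ so that this purely local information can be invoked.
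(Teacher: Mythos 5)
Your proof is correct and follows essentially the same route as the paper's: items 1 and 2 directly from the definition, and item 3 by noting that $\alpha<0$ forces $x_k\to 0$ almost surely and then linearizing $f$ at the origin. If anything, your treatment of item 3 is slightly more careful than the paper's, which writes the Taylor remainder as $\O(\norm{x_k}^2)$ (implicitly assuming more smoothness than differentiability at $0$), whereas your $o\left(\norm{x_k}\right)$ bound uses exactly the stated hypothesis.
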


\begin{proof}
The first two claims follow trivially from the definition of $\C(\alpha)$. For the third claim, notice that for all realizations of randomness, $\lim_{k\to\infty} \norm{x_k} = 0$. Therefore, Taylor expansion of $f(x)$ at $x=0$ gives $f(x_k) - f(0) = Df(0)x_k + \O(\norm{x_k}^2)$, where $Df$ is the Fr\'echet derivative of $f$. Dividing both sides by $k^{\alpha+\epsilon}$ leads to the conclusion.
\end{proof}
\subsubsection{Properties of a class of switched linear systems}\label{sec:switch}
We state and prove some properties of a class of switched linear systems that will be useful for the proof of all the main theorems of this paper.

Consider the dynamical system
\begin{align}
    x_{k+1} = A_k x_k + w_k,
    \label{eq:freesys1}
\end{align}
where $x_k, w_k \in \R{n}, x_0 \in \gaussian{0}{W_{-1}}, w_k \sim \gaussian{0}{W_k}$, $x_0, \{w_k \mid x_{0:k}\}$ are pairwise independent, and $A_k, W_k$ are functions of $x_{0:k}$ and $k$. We assume
\begin{align}
    W_k \succ 0, \norm{A_k} \leq \An, \norm{W_k}\leq \Wn, \quad \forall k.
    \label{eq:freesys2}
\end{align}

We first prove a few miscellaneous lemmas that will be useful shortly:

\begin{lemma}
Let $X\sim \chi^2(p)$, then $\P(X \geq M) \leq 2^{p/2} \exp(-M/4)$ for any $M > 0$.
\label{lemma:chi2}
\end{lemma}

\begin{proof}
Applying the Chernoff bound, we have
\begin{align*}
    \P(X\geq M) \leq \E\left[ \e^{tX} / \e^{tM} \right] = (1-2t)^{-p/2}\exp(-tM)
\end{align*}
for any $0<t<1/2$. Choosing $t=1/4$ leads to the conclusion.
\end{proof}

\begin{lemma}
Let $w\in \R{p}, w\sim \gaussian{0}{W}, W\succ 0$, then $\P(\norm{w} \geq M) \leq 2^{p/2}\exp\left( \frac{-M^2}{  4\norm{W}} \right)$.
\label{lemma:gaussian}
\end{lemma}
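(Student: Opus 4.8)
The plan is to bound the tail probability of the Euclidean norm of a Gaussian vector by reducing it to the chi-squared tail bound just established in Lemma~\ref{lemma:chi2}. The key observation is that $\norm{w}^2 = w^\top w$ is a quadratic form in a Gaussian vector, and after whitening by $W^{-1/2}$ it becomes a standard chi-squared random variable up to a scaling governed by the eigenvalues of $W$.

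Concretely, first I would write $w = W^{1/2} z$ where $z \sim \gaussian{0}{I_p}$, which is valid since $W \succ 0$ admits a symmetric positive-definite square root. Then $\norm{w}^2 = z^\top W z \leq \norm{W} \, z^\top z = \norm{W}\norm{z}^2$, where I use that the largest eigenvalue of $W$ equals $\norm{W}$ in the $2$-norm. Consequently the event $\{\norm{w} \geq M\}$ is contained in $\{\norm{z}^2 \geq M^2 / \norm{W}\}$. Since $\norm{z}^2 = \sum_{i=1}^p z_i^2 \sim \chi^2(p)$, I would apply Lemma~\ref{lemma:chi2} with threshold $M^2 / \norm{W}$ to obtain
\begin{align*}
    \P(\norm{w} \geq M) \leq \P\left( \norm{z}^2 \geq \frac{M^2}{\norm{W}} \right) \leq 2^{p/2}\exp\left( \frac{-M^2}{4\norm{W}} \right),
\end{align*}
which is exactly the claimed bound.

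There is no real obstacle here; this is a short reduction. The only point requiring a little care is the inequality $z^\top W z \leq \norm{W}\norm{z}^2$, which relies on identifying the spectral norm of the symmetric positive-definite matrix $W$ with its largest eigenvalue. I would make sure this step is stated cleanly, since it is the mechanism that converts the dependence on the covariance $W$ into the factor $\norm{W}$ in the exponent. The monotonicity of $\P$ under set inclusion then transfers the chi-squared tail bound directly to the norm of $w$, completing the argument.
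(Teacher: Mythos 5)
Your proof is correct and follows essentially the same route as the paper's: whiten via $v = W^{-1/2}w$ (equivalently $w = W^{1/2}z$), bound the quadratic form $z^\top W z \leq \norm{W}\norm{z}^2$ using the spectral norm, and invoke the chi-squared tail bound of Lemma~\ref{lemma:chi2} at threshold $M^2/\norm{W}$. No substantive difference from the paper's argument.
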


\begin{proof}
Let $v = W^{-1/2} w$, then $v \sim \gaussian{0}{I}$ and hence $\norm{v}^2 \sim \chi^2(p)$. From $\norm{w}^2 = w^\top w = v^\top Wv \leq \norm{v}^2 \norm{W}$, we have $\P(\norm{w} \geq M) = \P (\norm{w}^2 \geq M^2) \leq \P(\norm{v}^2 \geq M^2 / \norm{W})$. Applying Lemma~\ref{lemma:chi2} leads to the conclusion.
\end{proof}

\begin{lemma}
Let $0<x\leq 1/2$ and $1 < \alpha < 2$, then $\sum_{n=0}^\infty x^{\alpha^n} < 2x/(\alpha - 1)$.
\label{lemma:series}
\end{lemma}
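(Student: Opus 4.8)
The plan is to dominate the series termwise by a geometric series and then compare the resulting closed form with the target $2x/(\alpha-1)$. Writing $a_n = x^{\alpha^n}$, the super-exponential growth of the exponents is captured by the recursion $a_{n+1} = x^{\alpha^{n+1}} = (x^{\alpha^n})^\alpha = a_n^\alpha$, so that the consecutive ratio is $a_{n+1}/a_n = a_n^{\alpha-1}$. Since $\alpha > 1$ gives $\alpha^n \ge 1$ and $0 < x < 1$, every term obeys the uniform bound $a_n = x^{\alpha^n} \le x^1 = x$; because $t \mapsto t^{\alpha-1}$ is increasing on $(0,\infty)$, this yields $a_{n+1}/a_n \le x^{\alpha-1} =: r$, with $r \in (0,1)$.

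First I would turn this into the geometric majorization $a_n \le x\,r^n$ by a one-line induction (using $a_{n+1} = a_n \cdot a_n^{\alpha-1} \le a_n r$), and sum to obtain $\sum_{n=0}^\infty a_n \le x/(1-r) = x/(1-x^{\alpha-1})$. To recover the strict inequality demanded by the statement, I would note that the geometric bound is already strict at $n=2$: since $(\alpha-1)^2 > 0$ we have $\alpha^2 > 2\alpha-1$, hence $a_2 = x^{\alpha^2} < x^{2\alpha-1} = x r^2$, so the summed inequality is strict.

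It then remains to establish the elementary inequality $x/(1-x^{\alpha-1}) \le 2x/(\alpha-1)$, equivalently $x^{\alpha-1} \le 1 - (\alpha-1)/2$. Here I would exploit convexity of $s \mapsto x^s = e^{s\ln x}$: on $[0,1]$ its graph lies below the chord through $(0,1)$ and $(1,x)$, giving $x^s \le 1-(1-x)s$. Specializing to $s = \alpha-1 \in (0,1)$ and using $x \le 1/2$ (so $1-x \ge 1/2$) produces $x^{\alpha-1} \le 1-(1-x)(\alpha-1) \le 1-(\alpha-1)/2$, exactly as required. Chaining the two bounds gives $\sum_{n=0}^\infty x^{\alpha^n} < 2x/(\alpha-1)$.

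The convergence itself is immediate, so the main obstacle is matching the constant $2$ exactly. A crude lower bound on $1-x^{\alpha-1}$ — for instance via $\ln\alpha$ or an integral comparison for the sum — loses a factor and overshoots $2x/(\alpha-1)$, especially as $\alpha \to 2^-$ with $x = 1/2$, where the target is attained almost with equality. The convex chord bound $x^s \le 1-(1-x)s$ is sharp precisely at the endpoints $s = 0,1$, and it is its combination with $x \le 1/2$ that produces the clean factor $1/2$ and hence the constant $2$; getting this alignment right is the crux of the argument.
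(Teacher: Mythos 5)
Your proof is correct and follows essentially the same route as the paper's: both majorize $\sum_{n\ge 0} x^{\alpha^n}$ by the geometric series $\sum_{n\ge 0} x^{1+n(\alpha-1)} = x/(1-x^{\alpha-1})$ (your ratio induction is just Bernoulli's bound $\alpha^n \ge 1+n(\alpha-1)$ in disguise), and both then use the bound $x^{\alpha-1} \le 1-(1-x)(\alpha-1)$ — your convex-chord argument is the same inequality the paper quotes as $(1+t)^r \le 1+rt$ — together with $1-x \ge 1/2$ to land on the constant $2$. Your explicit handling of strictness at $n=2$ is a nice touch the paper passes over lightly, but the argument is the same.
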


\begin{proof}
By $\alpha > 1$, we have $\alpha^n - 1 = (1 + \alpha - 1)^n - 1 > n(\alpha - 1)$. Therefore, in view of $0<x<1$, we have
\begin{align*}
    \sum_{n=0}^\infty x^{\alpha^n} = x \sum_{n=0}^\infty x^{\alpha^n-1} < x \sum_{n=0}^\infty x^{(\alpha - 1) n} = \frac{x}{1 - x^{\alpha - 1}}.
\end{align*}
In view of the inequality $(1+t)^r \leq 1 + rt$ for $t\geq -1, 0\leq r \leq 1$, we have $x^{\alpha - 1} = (1+x-1)^{\alpha - 1} \leq 1 + (\alpha - 1)(x - 1)$, and therefore,
\begin{align*}
    \frac{x}{1 - x^{\alpha - 1}} \leq \frac{x}{(\alpha - 1)(1 - x)} \leq \frac{2x}{\alpha - 1}
\end{align*}
when $0 < x \leq 1/2$, and the conclusion follows.
\end{proof}

Now we state several properties of the system described in~\eqref{eq:freesys1}-\eqref{eq:freesys2}:

\begin{lemma}
For the system described in \eqref{eq:freesys1}-\eqref{eq:freesys2}, assuming there exists $0<\rho < 1$ and $P \succ 0$ such that $A_k^\top P A_k \prec \rho P$ for every $A_k$, then when $M\geq \sqrt{3\Wn \kappa(P)} / (1 - \rho^{1/4})$, there is
\begin{align*}
\P\left( \norm{x_k} \geq M \right) \leq \frac{2^{n/2+1}}{\rho^{-1/2} - 1}\exp\left( -\frac{(1 - \rho^{1/4})^2 M^2}{4\Wn \kappa(P)}\right),\quad \forall k.
\end{align*}
\label{lemma:escape}
\end{lemma}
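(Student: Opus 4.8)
The plan is to work in the $P$-weighted norm $\norm{x}_P \triangleq \sqrt{x^\top P x}$, under which the contraction hypothesis becomes a genuine pathwise contraction. The inequality $A_k^\top P A_k \prec \rho P$ gives $\norm{A_k x}_P \leq \sqrt{\rho}\,\norm{x}_P$ for every realization of $A_k$ and every $x$, so by induction the product $A_{k-1}\cdots A_{j+1}$ contracts the $P$-norm by a factor $\rho^{(k-1-j)/2}$. Unrolling \eqref{eq:freesys1} and writing $w_{-1}\triangleq x_0$ so that the initial and driving terms are handled uniformly, the triangle inequality yields the pathwise bound
\begin{equation*}
    \norm{x_k}_P \leq \sum_{j=-1}^{k-1} \rho^{(k-1-j)/2}\norm{w_j}_P,
\end{equation*}
valid for every sample path regardless of how $A_j$ depends on the history.

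Next I would convert the target event to the $P$-norm and split it across the summands by a geometric-weight argument. Since $\norm{x}_P \geq \sqrt{\lambda_{\min}(P)}\,\norm{x}$, the event $\{\norm{x_k}\geq M\}$ is contained in $\{\norm{x_k}_P \geq M_P\}$ with $M_P \triangleq \sqrt{\lambda_{\min}(P)}\,M$. I would then assign to the term of ``age'' $s=k-1-j$ the deterministic threshold $\tau_s \triangleq M_P(1-\rho^{1/4})\rho^{-s/4}$. A single geometric sum $\sum_{s\geq 0}\rho^{s/4}=(1-\rho^{1/4})^{-1}$ shows that if every $\norm{w_j}_P$ stays below its threshold then $\norm{x_k}_P < \sum_s \rho^{s/2}\tau_s \leq M_P$; contrapositively,
\begin{equation*}
    \{\norm{x_k}_P \geq M_P\} \subseteq \bigcup_{j=-1}^{k-1}\left\{ \norm{w_j}_P \geq \tau_{k-1-j} \right\}.
\end{equation*}
The exponent $-1/4$ in $\tau_s$ is the crucial tuning: it makes the weighted thresholds summable while leaving a factor $\rho^{-s/2}$ in the eventual tail exponent, which is what produces the super-geometric series below.

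The remaining work is a union bound combined with the Gaussian tail estimate of Lemma~\ref{lemma:gaussian}. Conditioning on $x_{0:j}$, the vector $w_j$ is Gaussian with covariance $W_j$ satisfying $\norm{W_j}\leq\Wn$, so $\P(\norm{w_j}_P\geq\tau)\leq\P(\norm{w_j}\geq\tau/\sqrt{\norm{P}})\leq 2^{n/2}\exp(-\tau^2/(4\Wn\norm{P}))$, and taking expectations removes the conditioning. Substituting $\tau_s$ and using $\norm{P}/\lambda_{\min}(P)=\kappa(P)$ collapses the exponent to $-D\rho^{-s/2}$ with $D\triangleq(1-\rho^{1/4})^2M^2/(4\Wn\kappa(P))$, giving
\begin{equation*}
    \P(\norm{x_k}\geq M) \leq 2^{n/2}\sum_{s=0}^{k}\exp\!\left(-D\rho^{-s/2}\right) = 2^{n/2}\sum_{s=0}^{k} x^{\alpha^s},
\end{equation*}
where $x\triangleq\e^{-D}$ and $\alpha\triangleq\rho^{-1/2}$. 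The hypothesis $M\geq\sqrt{3\Wn\kappa(P)}/(1-\rho^{1/4})$ forces $D\geq 3/4$, hence $x\leq\e^{-3/4}<1/2$, which is exactly the hypothesis of Lemma~\ref{lemma:series}; applying it (with $1<\alpha<2$) bounds the sum by $2x/(\alpha-1)$ and delivers the claimed constant $2^{n/2+1}/(\rho^{-1/2}-1)$ and exponential factor.

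The main obstacle I anticipate is not any single estimate but the bookkeeping that makes the reduction clean: choosing the threshold exponent $-1/4$ so that one geometric series simultaneously closes the deterministic inequality and yields a super-geometrically decaying tail series, and verifying that both the contraction and the tail bound survive the dependence of $A_k,W_k$ on the past (handled pathwise and by conditioning, respectively). One technical point to flag is that Lemma~\ref{lemma:series} requires $\alpha=\rho^{-1/2}<2$, i.e.\ $\rho>1/4$; this is the regime of interest, since the contraction factor is close to $1$ in the applications, and if needed one may relax the contraction to any $\rho'\in(1/4,1)$ (using $\rho P\preceq\rho' P$) before running the argument.
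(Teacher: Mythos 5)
Your proposal is correct and takes essentially the same route as the paper's own proof: unroll the recursion in the $P^{1/2}$-weighted norm using the pathwise contraction, split the event $\{\norm{x_k}\geq M\}$ across the noise terms with geometric thresholds (your $\rho^{-s/4}$ weights are exactly the paper's choice $\sigma=\rho^{1/4}$, which the paper only fixes at the end), then apply the union bound with the conditional Gaussian tail of Lemma~\ref{lemma:gaussian} and sum the resulting super-geometric series via Lemma~\ref{lemma:series}. The only differences are cosmetic: you commit to the threshold exponent up front rather than optimizing over $\sigma$, and you state the $\rho>1/4$ caveat explicitly, which the paper also invokes as a w.l.o.g.\ assumption.
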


\begin{proof}
From~\eqref{eq:freesys1}, we have
\begin{align*}
    x_k &= A_{k-1}x_{k-1} + w_{k-1} \\
    &= A_{k-1}(A_{k-2}x_{k-2} + w_{k-2}) + w_{k-1} \\
    &= \cdots \\
    &= w_{k-1} + A_{k-1}w_{k-2} + \cdots + A_{k-1}A_{k-2}\cdots A_1w_0 + A_{k-1}A_{k-2}\cdots A_0x_0.
\end{align*}
Pre-multiplying by $P^{1/2}$ on both sides and applying the triangle inequality, we have
\begin{align}
    \norm{P^{1/2} x_k} \leq \norm{P^{1/2} w_{k-1}} + \norm{P^{1/2}A_{k-1}w_{k-2}} + \cdots + \norm{P^{1/2} A_{k-1}\cdots A_1w_0} + \nonumber \\ \norm{P^{1/2} A_{k-1}\cdots A_0x_0}.
    \label{eq:Px}
\end{align}
From $A_k^\top P A_k \prec \rho P$, we have for any $w\in \R{n}, k \in \mathbb{N}$,
\begin{align}
    w^\top A_k^\top PA_k w < \rho w^\top Pw \Rightarrow \norm{P^{1/2} A_k w} < \rho^{1/2} \norm{P^{1/2}w}.
    \label{eq:PA}
\end{align}
Applying~\eqref{eq:PA} to~\eqref{eq:Px} recursively, we have
\begin{align}
    \norm{P^{1/2} x_k} \leq \norm{P^{1/2} w_{k-1}} + \rho^{1/2} \norm{P^{1/2}w_{k-2}} + \cdots + \rho^{(k-1)/2} \norm{P^{1/2}w_0} + \rho^{k/2} \norm{P^{1/2} x_0}.
    \label{eq:px_geo}
\end{align}
Meanwhile, we have
\begin{align*}
    & \{ \norm{x_k} \geq M \} = \left\{ \norm{P^{-1/2} P^{1/2} x_k } \geq M \right\} \\
    \subseteq & \left\{ \norm{P^{-1/2}} \norm{P^{1/2}x_k} \geq M \right\} = \left\{ \frac{1}{\sqrt{\lambda_{\min}(P)}} \norm{P^{1/2}x_k} \geq M \right\} \\ =& \left\{ \norm{P^{1/2} x_k} \geq M \sqrt{\lambda_{\min}(P)}\right\}.
\end{align*}
Therefore, for any $\sigma\in (\rho^{1/2}, 1)$, in view of~\eqref{eq:px_geo}, we have
\begin{align*}
    & \{ \norm{x_k} \geq M \} \subseteq \left\{ \norm{P^{1/2} x_k} \geq M \sqrt{\lambda_{\min}(P)}\right\} \\
    \subseteq &  \left\{ \norm{P^{1/2} x_k} \geq M \sqrt{\lambda_{\min}(P)} (1 - \sigma^{k+1})\right\} \\
    \subseteq & \left\{ \norm{P^{1/2} w_{k-1}} + \rho^{1/2} \norm{P^{1/2}w_{k-2}} + \cdots + \rho^{(k-1)/2} \norm{P^{1/2}w_0} + \rho^{k/2} \norm{P^{1/2} x_0} \geq \right. \\ &\mkern450mu\left.M \sqrt{\lambda_{\min}(P)} (1 - \sigma^{k+1})\right\} \\
    \subseteq & \left\{ \norm{P^{1/2} w_{k-1}} \geq (1 - \sigma) M \sqrt{\lambda_{\min}(P)} \right\} \cup \left\{ \rho^{1/2} \norm{P^{1/2} w_{k-2}} \geq \sigma (1 - \sigma) M \sqrt{\lambda_{\min}(P)} \right\} \cup  \\ & \mkern108mu \cdots \cup  \left\{ \rho^{(k-1)/2} \norm{P^{1/2} w_{0}} \geq \sigma^{k-1}(1 - \sigma) M \sqrt{\lambda_{\min}(P)} \right\} \cup \\&\mkern108mu\left\{ \rho^{k/2} \norm{P^{1/2} x_{0}} \geq \sigma^{k}(1 - \sigma) M \sqrt{\lambda_{\min}(P)} \right\}.
\end{align*}
Taking the union bound, we have
\begin{align*}
    \P(\norm{x_k} \geq M) \leq & \P\left( \norm{P^{1/2} w_{k-1}} \geq (1 - \sigma) M \sqrt{\lambda_{\min}(P)} \right) + \\
    & \P \left( \norm{P^{1/2} w_{k-2}} \geq \frac{\sigma}{\rho^{1/2} } (1 - \sigma) M \sqrt{\lambda_{\min}(P)} \right) + \\
    & \cdots + \\
    & \P \left( \norm{P^{1/2} w_{0}} \geq \left(\frac{\sigma}{\rho^{1/2} }\right)^{k-1}(1 - \sigma) M \sqrt{\lambda_{\min}(P)} \right)+ \\
    & \P \left( \norm{P^{1/2} x_{0}} \geq \left(\frac{\sigma}{\rho^{1/2} }\right)^{k}(1 - \sigma) M \sqrt{\lambda_{\min}(P)} \right).
\end{align*}
Since $P^{1/2} w_k \mid x_{0:k} \sim \gaussian{0}{P^{1/2}W_kP^{1/2}}$, and $\norm{W_k}\leq \Wn$, applying Lemma~\ref{lemma:gaussian}, we have
\begin{align}
    \P\left( \left.\norm{P^{1/2} w_{k-1}} \geq (1 - \sigma) M \sqrt{\lambda_{\min}(P)} \right| x_{0:k} \right) &\leq 2^{n/2} \exp\left( -\frac{(1 - \sigma)^2 M^2 \lambda_{\min}(P)}{4\norm{W_k}\norm{P}} \right) \nonumber\\
    &\leq 2^{n/2} \exp\left( -\frac{(1 - \sigma)^2 M^2 }{4\Wn\kappa(P)} \right). \label{eq:condP}
\end{align}
Noticing that the RHS of~\eqref{eq:condP} does not depend on $x_{0:k}$, we have
\begin{align*}
    \P\left( \norm{P^{1/2} w_{k-1}} \geq (1 - \sigma) M \sqrt{\lambda_{\min}(P)}  \right) \leq 2^{n/2} \exp\left( -\frac{(1 - \sigma)^2 M^2 }{4\Wn\kappa(P)} \right).
\end{align*}
Similarly, we have
\begin{align*}
    & \P \left( \norm{P^{1/2} w_{k-2}} \geq \frac{\sigma}{\rho^{1/2} } (1 - \sigma) M \sqrt{\lambda_{\min}(P)} \right) \leq \exp\left( -\frac{(1 - \sigma)^2 M^2 }{4\Wn\kappa(P)} \cdot \frac{\sigma^2}{\rho} \right), \\
    & \vdots \\
    & \P \left( \norm{P^{1/2} w_{0}} \geq \left(\frac{\sigma}{\rho^{1/2} }\right)^{k-1}(1 - \sigma) M \sqrt{\lambda_{\min}(P)} \right) \leq \exp\left( -\frac{(1 - \sigma)^2 M^2 }{4\Wn\kappa(P)} \cdot \left(\frac{\sigma^2}{\rho}\right)^{k-1} \right), \\
    & \P \left( \norm{P^{1/2} x_{0}} \geq \left(\frac{\sigma}{\rho^{1/2} }\right)^{k}(1 - \sigma) M \sqrt{\lambda_{\min}(P)} \right) \leq \exp\left( -\frac{(1 - \sigma)^2 M^2 }{4\Wn\kappa(P)} \cdot \left(\frac{\sigma^2}{\rho}\right)^{k} \right).
\end{align*}
Choose $\sigma = \rho^{1/4}$, then $\sigma^2 / \rho = \rho^{-1/2}$. Assuming w.l.o.g. that $\rho \in (1/4, 1)$, we have $\sigma^2 / \rho = \rho^{-1/2} \in (1, 2)$. When $M\geq \sqrt{3\Wn \kappa(P)} / (1 - \rho^{1/4})$, there is
\begin{align*}
    2^{n/2} \exp\left( -\frac{(1 - \sigma)^2 M^2 }{4\Wn\kappa(P)} \right) < \exp\left( -\frac34 \right) < \frac12.
\end{align*}
Applying Lemma~\ref{lemma:series} leads to the conclusion.
\end{proof}

\begin{lemma}
For the system described in \eqref{eq:freesys1}-\eqref{eq:freesys2}, assume there exists $0<\rho < 1, P \succ 0$ and $M>0$ such that $A_k^\top P A_k \prec \rho P$ as long as $\norm{x_k} \geq M$. Let $V_k = x_k^\top P x_k$, then
\begin{align*}
    \E V_k < \frac{(M^2\An^2 + \Wn)\norm{P}}{1 - \rho}, \quad \forall k.
\end{align*}
\label{lemma:EV}
\end{lemma}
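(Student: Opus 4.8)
The plan is to run a standard Lyapunov-drift argument on $V_k = x_k^\top P x_k$, turning the conditional contraction $A_k^\top P A_k \prec \rho P$ (which holds only when $\norm{x_k}\ge M$) into a single unconditional one-step drift inequality, and then summing the resulting geometric recursion. First I would compute the one-step conditional expectation. Writing $\F_k = \sigma(x_{0:k})$ and noting that $A_k, x_k$ are $\F_k$-measurable while $w_k \mid \F_k \sim \gaussian{0}{W_k}$ has zero conditional mean, I expand $V_{k+1} = (A_k x_k + w_k)^\top P(A_k x_k + w_k)$ and take $\E[\,\cdot \mid \F_k]$; the cross term vanishes, leaving
\[
    \E[V_{k+1}\mid\F_k] = x_k^\top A_k^\top P A_k x_k + \Tr(P W_k).
\]

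Next I would convert the conditional contraction into a bound on the quadratic form $x_k^\top A_k^\top P A_k x_k$ that is valid for every realization. The event $\{\norm{x_k}\ge M\}$ is $\F_k$-measurable, and on it the hypothesis gives $x_k^\top A_k^\top P A_k x_k \le \rho\, x_k^\top P x_k = \rho V_k$; on the complement $\norm{x_k} < M$, I instead use $x_k^\top A_k^\top P A_k x_k \le \An^2\norm{P}\norm{x_k}^2 < M^2\An^2\norm{P}$. Since both right-hand sides are nonnegative, each case can be dominated by the sum of the two bounds, so in all cases
\[
    x_k^\top A_k^\top P A_k x_k \le \rho V_k + M^2\An^2\norm{P}.
\]
Bounding the noise contribution by $\Tr(P W_k)\le \Wn\norm{P}$ (from $\norm{W_k}\le\Wn$) then yields the clean drift inequality $\E[V_{k+1}\mid\F_k] \le \rho V_k + D$, where $D := (M^2\An^2 + \Wn)\norm{P}$.

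Finally I would take total expectations to obtain the scalar linear recursion $\E V_{k+1} \le \rho\,\E V_k + D$ and unroll it. Its fixed point is $D/(1-\rho)$, and subtracting it gives $\E V_k - D/(1-\rho) \le \rho^k\big(\E V_0 - D/(1-\rho)\big)$. The initial term satisfies $\E V_0 = \Tr(P W_{-1}) \le \Wn\norm{P} \le D \le D/(1-\rho)$, so $\E V_0$ already lies below the fixed point; hence $\E V_k \le D/(1-\rho)$ for every $k$, which is exactly the claimed bound. I do not expect a genuine obstacle here, as the argument is routine; the only slightly delicate points are (i) fusing the \emph{conditional} contraction on $\{\norm{x_k}\ge M\}$ with the small-state case into one uniform drift inequality, and (ii) verifying that the initial condition does not spoil the uniform-in-$k$ bound, which works precisely because the noise-level estimate on $\E V_0$ is itself dominated by the fixed point $D/(1-\rho)$.
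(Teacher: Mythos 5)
Your proposal is correct and takes essentially the same route as the paper's proof: the same case split fusing the contraction on $\{\norm{x_k}\ge M\}$ with the crude bound $M^2\An^2\norm{P}$ on the complement, the same drift recursion $\E V_{k+1}\le\rho\,\E V_k+(M^2\An^2+\Wn)\norm{P}$, and the same initial-condition check $\E V_0\le\Wn\norm{P}$ (the paper closes by induction, you by unrolling to the fixed point, which is equivalent). The only cosmetic difference is that you state the conclusion with $\le$ while the lemma asserts strict inequality; since $\E V_0\le\Wn\norm{P}<(M^2\An^2+\Wn)\norm{P}/(1-\rho)$ and the recursion preserves strictness, your argument yields the strict bound with no extra work.
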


\begin{proof}
We first derive a recursive bound on $\E V_k$: from~\eqref{eq:freesys1}, we have
\begin{align}
    V_{k+1} = x_{k+1}^\top P x_{k+1} = x_k^\top A_k^\top P A_k x_k + 2w_k^\top PA_kx_k + w_k^\top P w_k.
    \label{eq:Vnext}
\end{align}
When $\norm{x_k}< M$, we have $x_k^\top A_k^\top P A_k x_k \leq M^2 \An^2 \norm{P}$. Otherwise, by $A_k^\top P A_k \prec \rho P$, we have $x_k^\top A_k^\top P A_k x_k < \rho x_k^\top P x_k = \rho V_k$. Therefore, in view of the fact that $M^2 \An^2 \norm{P} \geq 0$ and $\rho V_k \geq 0$, we have
\begin{align}
    x_k^\top A_k^\top P A_k x_k \leq \max\{ M^2 \An^2 \norm{P}, \rho V_k \geq 0 \} \leq M^2 \An^2 \norm{P} + \rho V_k,
    \label{eq:xapax}
\end{align}
which holds for any $k$. Substituting~\eqref{eq:xapax} into~\eqref{eq:Vnext}, we have
\begin{align}
    V_{k+1} \leq \rho V_k + \eta_k + C,
    \label{eq:Vnext_short}
\end{align}
where $\eta_k = 2w_k^\top PA_kx_k + w_k^\top P w_k, C = M^2 \An^2 \norm{P}$. With such defined $\eta_k$, we have
\begin{align}
    \E \eta_k \leq \Wn \norm{P},
    \label{eq:Eeta}
\end{align}
where we noticed that $\E[w_k^\top P A_k x_k] = \E[\E[w_k^\top P A_k x_k \mid x_{0:k}]] = \E[\E[w_k \mid x_{0:k}]^\top A_k x_k] = 0$, and that $\E[w_k^\top P w_k] = \E[\E[w_k^\top P w_k\mid x_{0:k}]] = \E[\Tr(W_kP)] \leq \Wn \norm{P}$. Taking the expectation on both sides of~\eqref{eq:Vnext_short}, we have
\begin{align}
    \E V_{k+1} \leq \rho \E V_k + (M^2\An^2 + \Wn)\norm{P}.
    \label{eq:EV_recursive}
\end{align}

We can proceed using induction: we have
\begin{align*}
    \E V_0 = \E[x_0^\top P x_0] = \E[\Tr(W_{-1}P)] \leq \Wn \norm{P} < \frac{(M^2\An^2 + \Wn)\norm{P}}{1 - \rho}.
\end{align*}
Assuming $\E V_k < \frac{(M^2\An^2 + \Wn)\norm{P}}{1 - \rho}$, it follows from~\eqref{eq:EV_recursive} that $\E V_{k+1} < \frac{(M^2\An^2 + \Wn)\norm{P}}{1 - \rho}$, which concludes our proof.
\end{proof}

\begin{lemma}
For the system described in \eqref{eq:freesys1}-\eqref{eq:freesys2}, assume that there exists $0<\rho < 1, P \succ 0$ and $M>0$ such that $A_k^\top P A_k \prec \rho P$ as long as $\norm{x_k} \geq M$, and that $A_k$ only depends on $\norm{x_k}$. Let $V_k = x_k^\top P x_k$, then
\begin{align*}
    \E V_k^2 < \frac{(M^2\An^2 + \Wn)\norm{P}^2}{(1 - \rho)(1 - \rho^2)}\left[ (1+\rho)(M^2\An^2 + \Wn) + 4\An^2\Wn\kappa(P) \right], \quad \forall k.
\end{align*}
\label{lemma:EV2}
\end{lemma}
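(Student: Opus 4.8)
The plan is to extend the first-moment analysis of Lemma~\ref{lemma:EV} to the second moment by squaring the one-step recursion and carefully controlling the quartic noise terms through conditioning on $x_{0:k}$. Starting from \eqref{eq:Vnext}, I would split $V_{k+1} = a_k + b_k + c_k$, where $a_k = x_k^\top A_k^\top P A_k x_k$ is the predictable (i.e.\ $x_{0:k}$-measurable) drift, $b_k = 2 w_k^\top P A_k x_k$ is the cross term, and $c_k = w_k^\top P w_k$ is the noise quadratic. As in \eqref{eq:xapax}, the contraction hypothesis gives $0 \le a_k \le \rho V_k + C$ with $C = M^2 \An^2 \norm{P}$; writing $m_k = \E[c_k \mid x_{0:k}] = \Tr(W_k P) \le \Wn \norm{P}$, I would regroup as $V_{k+1} \le (\rho V_k + C + m_k) + (b_k + c_k - m_k)$, where the first parenthesis is nonnegative and $x_{0:k}$-measurable and the second is a conditionally zero-mean increment. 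Since $V_{k+1} \ge 0$ forces the right-hand side to be nonnegative, I may square and take $\E[\cdot \mid x_{0:k}]$ to obtain $\E[V_{k+1}^2 \mid x_{0:k}] \le (\rho V_k + C + m_k)^2 + \mathrm{Var}(b_k + c_k \mid x_{0:k})$.

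The heart of the argument is bounding the conditional variance. Conditioned on $x_{0:k}$, $w_k \sim \gaussian{0}{W_k}$, so the mixed moment $\E[b_k(c_k - m_k) \mid x_{0:k}]$ is an odd-order Gaussian moment and vanishes, decoupling the two contributions. For the cross term I would use linearity in $w_k$: $\E[b_k^2 \mid x_{0:k}] = 4\,(P A_k x_k)^\top W_k (P A_k x_k) \le 4 \Wn \An^2 \norm{P}^2 \norm{x_k}^2$, and then convert state norm to Lyapunov value via $\norm{x_k}^2 \le V_k/\lambda_{\min}(P)$, which is exactly what introduces the factor $\kappa(P) = \norm{P}/\lambda_{\min}(P)$ and yields $\E[b_k^2 \mid x_{0:k}] \le 4 \An^2 \Wn \norm{P}\kappa(P) V_k$. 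For the noise quadratic I would invoke the standard Gaussian identity $\mathrm{Var}(c_k \mid x_{0:k}) = 2\Tr((P W_k)^2) \le 2(\Tr(P W_k))^2 \le 2\Wn^2\norm{P}^2$, using $\Tr(S^2) \le (\Tr S)^2$ for $S = P^{1/2}W_k P^{1/2} \succeq 0$.

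Combining these, the conditional bound becomes $\E[V_{k+1}^2 \mid x_{0:k}] \le \rho^2 V_k^2 + \big(2\rho(M^2\An^2+\Wn)\norm{P} + 4\An^2\Wn\norm{P}\kappa(P)\big) V_k + (\text{pure constants})$. Taking expectations and substituting the first-moment estimate $\E V_k < (M^2\An^2+\Wn)\norm{P}/(1-\rho)$ from Lemma~\ref{lemma:EV} turns the $V_k$-linear coefficient into a constant, leaving a scalar recursion of the shape $\E V_{k+1}^2 \le \rho^2 \E V_k^2 + K$. I would then close the bound by induction on $k$ exactly as in Lemma~\ref{lemma:EV}: the base case uses $\E V_0^2 = 2\Tr((PW_{-1})^2) + (\Tr(PW_{-1}))^2$, and the inductive step propagates $\E V_k^2 < K/(1-\rho^2)$, where the leading coefficient collapses to the stated form after the algebraic identity $2\rho/(1-\rho) + 1 = (1+\rho)/(1-\rho)$.

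I expect the main obstacle to be the bookkeeping of the quartic terms: ensuring the cross-moments genuinely cancel (which relies on the Gaussianity of $w_k$ and the $x_{0:k}$-measurability of $A_k$, guaranteed here since $A_k$ depends only on $\norm{x_k}$), and—more importantly—recognizing that $\E[b_k^2\mid x_{0:k}]$ must be controlled through $\norm{x_k}^2$ rather than $V_k$ directly, so that the conversion to $V_k$ is precisely what forces the appearance of $\kappa(P)$. Matching the exact constant in the stated inequality then requires some care in which residual terms are absorbed into the $4\An^2\Wn\kappa(P)$ bracket versus the leading $(1+\rho)(M^2\An^2+\Wn)$ term.
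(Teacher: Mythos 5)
Your proposal is correct and takes essentially the same route as the paper's proof: squaring the one-step bound $V_{k+1} \leq \rho V_k + C + \eta_k$ inherited from Lemma~\ref{lemma:EV}, killing the odd Gaussian cross-moments by conditioning on $x_{0:k}$ (which is where the measurability of $A_k$ enters), converting $\norm{x_k}^2 \leq V_k/\lambda_{\min}(P)$ to produce the $\kappa(P)$ factor, substituting the first-moment bound of Lemma~\ref{lemma:EV} to linearize the recursion, and closing by induction on $\E V_{k+1}^2 \leq \rho^2 \E V_k^2 + K$ — your predictable-plus-zero-mean regrouping expands to exactly the same terms as the paper's direct expansion of $\E V_{k+1}^2$. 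The only divergence is the pure-noise contribution, where your $\mathrm{Var}(c_k \mid x_{0:k}) + m_k^2 \leq 3\Wn^2\norm{P}^2$ gives a marginally larger constant than the paper's blunter bound $\E[(w_k^\top P w_k)^2] \leq \Wn^2\norm{P}^2$ (itself loose by dimension-dependent factors), a bookkeeping discrepancy you flag yourself and which does not affect the substance of the lemma.
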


\begin{proof}
From the inequality~\eqref{eq:Vnext_short} in the proof of Lemma~\ref{lemma:EV}, we have
\begin{align}
    \E V_{k+1}^2 \leq & \rho^2\E V_k^2 + \E \eta_k^2 + C^2 + 2\rho \E[V_k\eta_k] + 2\rho C\E V_k + 2C\E\eta_k,
    \label{eq:EV2next}
\end{align}
where $\eta_k = 2w_k^\top PA_kx_k + w_k^\top P w_k, C = M^2 \An^2 \norm{P}$. Let us bound $\E\eta_k, \E V_k, \E\eta_k^2, \E[V_k\eta_k]$ which appear in the RHS of~\eqref{eq:EV2next} respectively:
\begin{itemize}
    \item $\E \eta_k \leq \Wn \norm{P}$ from~\eqref{eq:Eeta} in the proof of Lemma~\ref{lemma:EV}.
    \item $\E V_k < (C+\Wn\norm{P})/(1 - \rho)$ according to the conclusion  of Lemma~\ref{lemma:EV}.
    \item $\E\eta_k^2 = 4\E[ x_k^\top A_k^\top P w_k w_k^\top P A_kx_k] + 4\E[w_k^\top PA_k x_k w_k^\top P w_k] + \E[w_k^\top P w_k w_k^\top P w_k] \leq 4\E\norm{x_k}^2 \An^2 \Wn \norm{P}^2 + \Wn^2\norm{P}^2$, where we noticed $$\E[w_k^\top PA_k x_k w_k^\top P w_k] = \E[\E[w_k^\top PA_k x_k w_k^\top P w_k\mid x_k]] = \Tr\left\{ \E[A_k x_k \E[w_k^\top P w_kw_k^\top P \mid x_k]] \right\} = 0$$ due to symmetry. From $V_k = x_k^\top P x_k \geq \lambda_{\min}(P) \norm{x_k}^2$, we can further deduce $\E\eta_k^2 \leq 4\An^2 \Wn \norm{P}\kappa(P)\E V_k + \Wn^2 \norm{P}^2$.
    \item $\E[V_k\eta_k] = 2\E[w_k^\top P A_k V_k] + \E[V_k w_k^\top P w_k] = 2\E[\E[w_k^\top PA_kx_kV_k \mid x_k]] + \E[\E[v_k w_k^\top P w_k \mid x_k]] = 2\E[\E[w_k\mid x_k]^\top PA_kx_kV_k] + \E[V_k\E[w_k^\top P w_k \mid x_k ]] \leq \Wn \norm{P} \E V_k$.
\end{itemize}
Substituting the above terms into the RHS of~\eqref{eq:EV2next}, we have
\begin{align}
    \E V_{k+1}^2 & < \rho^2 \E V_k^2 + \left( 4\An^2 \Wn \norm{P} \kappa(P) + 2\rho(C+\Wn\norm{P})\right) \frac{C+\Wn\norm{P}}{1 - \rho} + \nonumber\\ &\mkern50mu C^2 + 2C\Wn \norm{P} + \Wn^2 \norm{P}^2 \nonumber\\
    &= \rho^2 \E V_k^2 + \frac{C+\Wn\norm{P}}{1 - \rho}\left[ (1+\rho)(C+\Wn\norm{P}) + 4\An^2\Wn\norm{P}\kappa(P) \right].
    \label{eq:EV2_recursive}
\end{align}
We can then proceed using induction: we have
\begin{align*}
    \E V_0^2 & \leq \Wn^2\norm{P}^2 = (\Wn \norm{P})(\Wn \norm{P}) \\
    & < \frac{C + \Wn \norm{P}}{1 - \rho} (C+\Wn \norm{P}) \\
    & < \frac{C + \Wn \norm{P}}{(1-\rho)(1 - \rho^2)}\left[ (1+\rho)(C+\Wn\norm{P}) + 4\An^2\Wn\norm{P}\kappa(P) \right].
\end{align*}
Assuming $\E V_k < \frac{C + \Wn \norm{P}}{(1-\rho)(1 - \rho^2)}\left[ (1+\rho)(C+\Wn\norm{P}) + 4\An^2\Wn\norm{P}\kappa(P) \right]$, we also have $\E V_{k+1} < \frac{C + \Wn \norm{P}}{(1-\rho)(1 - \rho^2)}\left[ (1+\rho)(C+\Wn\norm{P}) + 4\An^2\Wn\norm{P}\kappa(P) \right]$ by~\eqref{eq:EV2_recursive}. Therefore, we have
\begin{align*}
    \E V_k &< \frac{C + \Wn \norm{P}}{(1-\rho)(1 - \rho^2)}\left[ (1+\rho)(C+\Wn\norm{P}) + 4\An^2\Wn\norm{P}\kappa(P) \right] \\
    &= \frac{(M^2\An^2 + \Wn)\norm{P}^2}{(1 - \rho)(1 - \rho^2)}\left[ (1+\rho)(M^2\An^2 + \Wn) + 4\An^2\Wn\kappa(P) \right]
\end{align*}
for any $k$.
\end{proof}

A linear dynamical system~\eqref{eq:dynamics} driven by our safe switching policy can be viewed as an instance of the system described in~\eqref{eq:freesys1}-\eqref{eq:freesys2}. Formally, we consider the following system:
\begin{align}
    x_{k+1} = Ax_k + Bu_k + w_k,
    \label{eq:forcedsys1}
\end{align}
We consider the following two classes of policies:
\begin{itemize}
    \item Linear feedback policy: $u_k = \pi^K(x_k) = Kx_k$.
    \item Safe switching policy: $(u_k, \xi_{k+1}) = \bpi^{K,M,t}(x_k,\xi_k)$, where $M>0$ is the switching threshold and $t\in \mathbb{N}^*$ is the ``non-action'' duration, and $\bpi^{K,M,t}$ determines $u_k,\xi_{k+1}$ as follows:
    \begin{itemize}
        \item If $\xi_k>0$, then $u_k = 0, \xi_{k+1} = \xi_k - 1$;
        \item If $\xi_k = 0, \max\{ \| K \|, \| x_k \| \} \geq M$, then $u_k = 0, \xi_{k+1}=t - 1$;
        \item If $\xi_k = 0, \max\{ \| K \|, \| x_k \| \} < M$, then $u_k = Kx_k, \xi_{k+1}=0$.
    \end{itemize}
    Notably, the linear feedback policy can be viewed as a special case of the safe switching policy where the switching threshold is infinity, i.e., $\pi^K = \bpi^{K,+\infty, 1}$.
\end{itemize}

For the simplicity of expressions, when a safe switching policy $\bpi^{K_k, M_k, t_k}$ is applied at each step $k$, let us define the following notations:
\begin{align}
    i(0) = 0, i(k+1) = \begin{cases} i(k) + 1 & u_{i(k)} \neq 0 \\ i(k) + t_{i(k)} & u_{i(k)} = 0 \end{cases}, \xt_k = x_{i(k)}.
    \label{eq:forcedsys2}
\end{align}
In plain words, $\{\xt_k\} = \{x_{i(k)}\}$ is the subsequence of $\{x_k\}$ where ``non-action'' steps are skipped. In addition, let
\begin{align}
    \Wn = \norm{\lim_{t\to\infty} W + AWA^\top + \cdots + A^t W (A^t)^\top },
    \label{eq:forcedsys3}
\end{align}
which is a finite value since $A$ is stable.

\begin{lemma}
For the system described in~\eqref{eq:forcedsys1}-\eqref{eq:forcedsys3}, assume a safe switching policy $\bpi^{K_k, M_k, t_k}$ is applied at each step $k$, $M_k \leq M$ for any $k$, and there exists $0<\rho<1$ and $P\succ 0$ such that $A^\top PA \prec \rho P$, then
\begin{align*}
    \E \norm{x_k}^4 < 8\left[ \Q + \Wn^2 \kappa(P)^2 \right],
\end{align*}
where
\begin{align*}
    & \Q = \frac{(M^2\An^2 + \Wn)\norm{P}^2}{(1 - \rho)(1 - \rho^2)}\left[ (1+\rho)(M^2\An^2 + \Wn) + 4\An^2\Wn\kappa(P) \right], \\
    & \An = \norm{A} + \norm{B}M.
\end{align*}
\label{lemma:Ex4}
\end{lemma}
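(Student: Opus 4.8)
The plan is to reduce the fourth-moment bound for the full trajectory to a second-moment bound for the ``decision-point'' subsequence $\{\xt_k\}=\{x_{i(k)}\}$ of~\eqref{eq:forcedsys2}, control that subsequence with Lemma~\ref{lemma:EV2}, and then reconstruct $\E\norm{x_k}^4$ at the skipped steps. First I would pin down the closed-loop matrix under $\bpi^{K_k,M_k,t_k}$: at every step it is either $A$ (whenever $u_k=0$, i.e.\ inside a ``non-action'' window or at a switching trigger) or $A+BK_k$ (only when feedback fires, which forces $\norm{x_k}<M_k\le M$ and $\norm{K_k}<M$, whence $\norm{A+BK_k}\le\norm{A}+\norm{B}M=\An$). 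The decisive structural fact is that $\norm{x_k}\ge M$ always suppresses feedback, so $A_k=A$ there and the Lyapunov form contracts, $A^\top PA\prec\rho P$.

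This is precisely the regime of Lemmas~\ref{lemma:EV} and~\ref{lemma:EV2}, with one caveat: on the full trajectory $A_k$ is \emph{not} a function of $\norm{x_k}$ alone, since a non-action window keeps $A_k=A$ even when $\norm{x_k}<M$, and the feedback branch also reads $\norm{K_k}$. To recover a clean switched system I would pass to the subsequence and write both transition types uniformly as $\xt_{k+1}=\At_k\xt_k+\wt_k$, with $\At_k=A+BK_{i(k)}$ when feedback fires and $\At_k=A^{t_{i(k)}}$ when a non-action window of length $t_{i(k)}$ is triggered, and $\wt_k=\sum_{l=0}^{t_{i(k)}-1}A^{t_{i(k)}-1-l}w_{i(k)+l}$ the mean-zero accumulation of process noise over the skipped steps. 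Its conditional covariance is $\sum_{s=0}^{t-1}A^sW(A^s)^\top$, of norm at most $\Wn$ by~\eqref{eq:forcedsys3}; the $\At_k$ are adapted to the history at time $i(k)$; and whenever $\norm{\xt_k}\ge M$ a non-action trigger occurs, so $\At_k=A^{t}$ and $\At_k^\top P\At_k\prec\rho^{t}P\preceq\rho P$. Thus the subsequence satisfies the hypotheses behind Lemma~\ref{lemma:EV2} (in the feedback branch $\norm{\At_k}\le\An$, while in the contracting branch the $P$-contraction $\norm{P^{1/2}\At_k\xt_k}\le\rho^{t/2}\norm{P^{1/2}\xt_k}$ replaces the norm bound in the $\eta_k$ estimates, and $\wt_k$ stays conditionally symmetric so the cubic term still vanishes), giving $\E\Vt_k^2<\Q$ for $\Vt_k=\xt_k^\top P\xt_k$.

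It then remains to restore the skipped steps. Any time $k$ sits $m\ge0$ steps into the window after the last decision point $i(j)$, so $x_k=A^m\xt_j+\nu_k$ with $\nu_k=\sum_{l=0}^{m-1}A^{m-1-l}w_{i(j)+l}$ again Gaussian of covariance bounded by $\Wn$. Applying $(a+b)^2\le2a^2+2b^2$ twice gives $\norm{x_k}^4\le8\norm{A^m\xt_j}^4+8\norm{\nu_k}^4$, which is the origin of the factor $8$. The $P$-contraction turns the propagated term into a subsequence quantity, $\norm{A^m\xt_j}^2\le\rho^m\Vt_j/\lambda_{\min}(P)$, so $\E\norm{A^m\xt_j}^4$ is controlled by $\E\Vt_j^2<\Q$ (up to the normalization of $P$), while a Gaussian fourth-moment estimate bounds $\E\norm{\nu_k}^4$ by $\Wn^2\kappa(P)^2$; tallying the two pieces yields $8\left[\Q+\Wn^2\kappa(P)^2\right]$.

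The main obstacle, I expect, is the subsequence reduction itself. Because the counter $\xi_k$ breaks the ``$A_k$ depends only on $\norm{x_k}$'' hypothesis, Lemma~\ref{lemma:EV2} cannot be invoked on $\{x_k\}$ directly; the substantive work is the bookkeeping that identifies the effective noise $\wt_k$, verifies its covariance is uniformly dominated by $\Wn$ across non-action windows of arbitrary time-varying length $t_{i(k)}$, and confirms that in the contracting branch the $P$-contraction genuinely substitutes for the missing $\norm{\At_k}\le\An$ bound inside the $\eta_k$ estimates of Lemma~\ref{lemma:EV2}.
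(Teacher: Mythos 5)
Your proposal follows the paper's proof essentially step for step: pass to the decision-point subsequence $\{\xt_k\}$, invoke Lemma~\ref{lemma:EV2} to bound $\E\Vt_k^2$, then reconstruct the skipped steps via $x_k = A^{k-i(j)}\xt_j + \wt_k$, the power-mean inequality (the source of the factor $8$), the $P$-contraction on the propagated term, and a Gaussian moment bound on the accumulated noise. If anything, you are more careful than the paper on the subsequence bookkeeping: the paper records the non-action transition as $A$ with covariance $W$ (rather than $A^{t_k}$ with the accumulated window covariance) and applies Lemma~\ref{lemma:EV2} without discussing its ``$A_k$ depends only on $\norm{x_k}$'' hypothesis, which are precisely the caveats you flag and patch.
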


\begin{proof}
According to~\eqref{eq:forcedsys1} and~\eqref{eq:forcedsys2}, the state trajectory $\{\xt_k\}$ evolves subject to the dynamics
\begin{align}
\xt_{k+1} = A_k \xt_k + \wt_k,
\label{eq:xtilde}
\end{align}
where
\begin{align*}
& A_k = \begin{cases} A & \max\{ \norm{K_k}, \norm{\xt_k} \} \geq M \\ A+BK_k & \text{otherwise} \end{cases}, \\
& \wt_k \mid \xt_k \sim \gaussian{0}{W_k}, W_k = \begin{cases} W & \max\{ \norm{K_k}, \norm{\xt_k} \} \geq M \\ W + AWA^\top + \cdots + A^{t_k} W (A^{t_k})^\top & \text{otherwise} \end{cases}.
\end{align*}
Let $V_k = x_k^\top P x_k$ and $\Vt_k = \xt_k^\top P \xt_k$. Since $A^\top P A \prec \rho P$, the system~\eqref{eq:xtilde} satisfies the assumption of Lemma~\ref{lemma:EV2}, and $\norm{W_k} \leq \Wn, \norm{A_k} \leq \norm{A} + \norm{B}M = \An$. Therefore,
and therefore,
\begin{align}
    \E \Vt_k^2 < \frac{(M^2\An^2 + \Wn)\norm{P}^2}{(1 - \rho)(1 - \rho^2)}\left[ (1+\rho)(M^2\An^2 + \Wn) + 4\An^2\Wn\kappa(P) \right] = \Q \lambda_{\min}(P)^2.
    \label{eq:EVt2}
\end{align}
Next, to establish the relationship between $\E \Vt_k^2$ and $\E V_k^2$, notice that
\begin{align}
    x_k = A^{k - i(j)} \xt_j + \wt_k,
    \label{eq:xk_decompose}
\end{align}
where $j = \sup\{ s\in \mathbb{N} \mid i(s) \leq k\}$, i.e., $\xt_j$ is the last state in $\{\xt_k\}$ that physically occurs no later than $x_k$, and $\wt_k = \sum_{l=0}^{k - i(j) - 2} A^{k - i(j) - 1 - l} w_{i(j)+l}$, Since $A$ is stable, we have $\| \E[\wt_k\wt_k^\top] \| \leq \Wn$. Pre-multiplying both sides of~\eqref{eq:xk_decompose} with $P^{1/2}$ and applying the triangle inequality, we have
\begin{align*}
    \norm{P^{1/2}x_k} \leq \norm{P^{1/2} A^{k - i(j)} \xt_j} + \norm{P^{1/2}\wt_k}.
\end{align*}
Therefore, applying the power means inequality $(\frac{a+b}{2})^4\leq \frac{a^4+b^4}{2}$, we have
\begin{align*}
    V_k^2 = \norm{P^{1/2} x_k}^4 & \leq 8 \left( \norm{P^{1/2} A^{k - i(j)} \xt_j}^4 + \norm{P^{1/2} \wt_k}^4 \right) \\
    &= 8 \left( \left( \xt_j^\top \left(A^{k - i(j)}\right)^\top P A^{k - i(j)} \xt_j \right)^2 + \left( \wt_k^\top P \wt_k \right)^2\right) \\
    & \leq 8 \left( \rho^{2(k - i(j))} \Vt_j^2 + \left( \wt_k^\top P \wt_k\right)^2 \right).
\end{align*}
Taking the expectation on both sides of the above inequality and applying~\eqref{eq:EVt2}, we get
\begin{align*}
    \E V_k^2 \leq 8 \left[ \Q \lambda_{\min}(P)^2 + \Wn^2 \norm{P}^2 \right].
\end{align*}
Finally, using the fact that $V_k = x_k^\top P x_k \geq \lambda_{\min}(P) \norm{x_k}^2$, we have
\begin{align*}
    \E\norm{x_k}^4 \leq \frac{\E V_k^2}{\lambda_{\min}(P)^2} \leq 8\left[ \Q + \Wn^2\kappa(P)^2 \right],
\end{align*}
which concludes our proof.
\end{proof}

\begin{lemma}
For the system described in~\eqref{eq:forcedsys1}-\eqref{eq:forcedsys3}, assume $K$ is a stabilizing linear feedback~(i.e., $\rho(A+BK)<1$) gain with $\norm{K} \leq M$, and there exists $0<\rho<1$ and $P\succ 0$ such that $A^\top PA \prec \rho P, (A+BK)^\top P (A+BK)\prec \rho P$. Let $J$ be the cost (defined in~\eqref{eq:J_pi}) associated with the linear feedback policy $\pi^K$, and $J^{M,t}$ be the cost associated with the safe switching policy $\bpi^{K,M,t}$. Then
\begin{align*}
    J^{M,t} - J \leq 2 \Cout\G + \G^2,
\end{align*}
where
\begin{align*}
& \Cout = \left( \frac{\Wn \kappa(P) \norm{Q+K^\top RK}}{1 - \rho}\right)^{1/2}, \\
& \G = \Cin t \left[ \Q + \Wn^2\kappa(P)^2 \right]^{1/4}\Efac, \\
& \Cin = \norm{Q+K^\top RK}\norm{BK}\cdot \sum_{s=0}^\infty \norm{(A+BK)^s}\cdot \frac{2^{n/2+7/4}}{\rho^{-1/2}-1}, \\
& \mkern50mu \text{($\Cin<\infty$ because $\rho(A+BK)<1$)}, \\
& \Q = \frac{(M^2\An^2 + \Wn)\norm{P}^2}{(1 - \rho)(1 - \rho^2)}\left[ (1+\rho)(M^2\An^2 + \Wn) + 4\An^2\Wn\kappa(P) \right], \\
& \Efac = \exp\left( - \frac{(1 - \rho^{1/4})^2 M^2}{4\Wn \kappa(P)} \right).
\end{align*}
In particular, if the system specification $A,B,Q,R,\Wn$, the stabilizing linear feedback gain $K$, and the parameters $\rho, P$ are all fixed, we have
\begin{align*}
    J^{M,t} - J \sim \O \left( tM \exp(-cM^2)\right)
\end{align*}
as $M\to\infty, t\to\infty, tM\exp(-cM^2) \to 0$, where $c = \frac{(1 - \rho^{1/4})^2}{4\Wn \kappa(P)}$.
\label{lemma:Jexp}
\end{lemma}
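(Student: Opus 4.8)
The plan is to compare the switched closed loop against the trajectory that the pure feedback $\pi^K$ would have produced from the \emph{same} randomness, and to show the two agree except on rare ``escape'' events whose averaged effect is exponentially small in $M$. I would couple the processes: let $\{x_k\}$ be the state under $\bpi^{K,M,t}$ and $\{x_k^{\mathrm{lin}}\}$ the state under $\pi^K$, both driven by the same $x_0$ and $\{w_k\}$, and set $y_k = x_k - x_k^{\mathrm{lin}}$ with $y_0 = 0$. Writing $L = (Q+K^\top RK)^{1/2}$, the structural observation is that $\bpi^{K,M,t}$ applies exactly $u_k = Kx_k$ except on the ``non-action'' steps, where it applies $u_k = 0$; since $Q \preceq Q+K^\top RK$, on \emph{every} step the instantaneous cost is at most $\norm{Lx_k}^2$, so that $J^{M,t} \le \limsup_{T\to\infty}\frac1T\E\sum_{k=0}^{T-1}\norm{Lx_k}^2$, while $J = \limsup_{T\to\infty}\frac1T\E\sum_{k=0}^{T-1}\norm{Lx_k^{\mathrm{lin}}}^2$. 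Subtracting and expanding the square gives $\norm{Lx_k}^2 - \norm{Lx_k^{\mathrm{lin}}}^2 = 2(Ly_k)^\top(Lx_k^{\mathrm{lin}}) + \norm{Ly_k}^2$, where all expectations may be evaluated under the coupling because each marginal is unchanged.

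Next I would apply Cauchy--Schwarz to the averaged cross term in $L^2(\Omega\times\{0,\dots,T-1\})$, bounding it by $2\bigl(\frac1T\E\sum\norm{Lx_k^{\mathrm{lin}}}^2\bigr)^{1/2}\bigl(\frac1T\E\sum\norm{Ly_k}^2\bigr)^{1/2}$. The first factor tends to at most $\Cout$: this is the second-moment bound of Lemma~\ref{lemma:EV} specialized to the linear closed loop $A+BK$, for which $A_k^\top P A_k \prec \rho P$ holds at \emph{every} state so the threshold term drops, combined with $\norm{Lx}^2 \le \norm{Q+K^\top RK}\,\norm{x}^2$ and $\norm{x}^2 \le V/\lambda_{\min}(P)$. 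Writing $\Gamma = \limsup_T\bigl(\frac1T\E\sum_{k<T}\norm{Ly_k}^2\bigr)^{1/2}$ for the root-mean-square size of $\{Ly_k\}$, subadditivity of $\limsup$ and nonnegativity then give $J^{M,t}-J \le 2\Cout\Gamma + \Gamma^2$, and it remains only to prove $\Gamma \le \G$.

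The crux, and the step I expect to be hardest, is bounding $\Gamma$. The coupled dynamics yield the recursion $y_{k+1} = (A+BK)y_k - BK\,x_k\,\ind{k\text{ is a non-action step}}$, so $y_k = -\sum_{j<k}(A+BK)^{k-1-j}BK\,x_j\,\ind{j\text{ non-action}}$ is the convolution of the closed-loop impulse response with the escape-triggered ``kicks.'' Using $\norm{K}\le M$ to reduce the switching test to $\norm{x_k}\ge M$, a non-action step at $j$ lies in a length-$t$ window opened by some $i$ with $\norm{x_i}\ge M$; a union bound over the $t$ steps of such a window together with the escape estimate $\P(\norm{x_i}\ge M)\le\frac{2^{n/2+1}}{\rho^{-1/2}-1}\Efac$ of Lemma~\ref{lemma:escape} controls how often kicks occur, while the uniform fourth-moment bound $\E\norm{x_j}^4 < 8[\Q + \Wn^2\kappa(P)^2]$ of Lemma~\ref{lemma:Ex4} controls their size. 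Estimating the convolution through Young's inequality with the $\ell^1$ kernel norm $\sum_{s\ge0}\norm{(A+BK)^s}$ (finite since $\rho(A+BK)<1$) and combining the moment and probability estimates by Cauchy--Schwarz/H\"older is meant to produce exactly the factors $\norm{BK}$, $\sum_s\norm{(A+BK)^s}$, $t$, $[\Q+\Wn^2\kappa(P)^2]^{1/4}$, and $\Efac$ that are assembled in $\Cin$ and $\G$. The delicacy is carrying out the moment/probability split and the window bookkeeping so that these estimates collapse into the clean product $\Gamma\le\G$.

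Finally, the asymptotic claim is immediate from the explicit form of $\G$: with $A,B,Q,R,\Wn,K,\rho,P$ all fixed, $\Cout$ and $\Cin$ are constants and $[\Q+\Wn^2\kappa(P)^2]^{1/4}$ grows only polynomially in $M$, whereas $\Efac = \exp(-cM^2)$ with $c=\frac{(1-\rho^{1/4})^2}{4\Wn\kappa(P)}$ decays super-polynomially; hence the polynomial prefactor is dominated by the exponential, so $\G$, and with it $2\Cout\G+\G^2$ in the regime $tM\exp(-cM^2)\to 0$, is of order $tM\exp(-cM^2)$.
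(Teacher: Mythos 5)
Your proposal is correct and follows essentially the same route as the paper's own proof: the same noise coupling of the switched and linear trajectories, the same expansion of the difference as the convolution of $(A+BK)^s BK$ with the ``non-action'' kicks, the same union bound over the length-$t$ window combined with Lemma~\ref{lemma:escape} and the fourth-moment bound of Lemma~\ref{lemma:Ex4} via Cauchy--Schwarz/H\"older, and the same $2\Cout\G+\G^2$ assembly with $\Cout$ coming from Lemma~\ref{lemma:EV} applied to the linear closed loop. The only differences are bookkeeping (you work with time-averaged second moments where the paper bounds each $\E\norm{\xb_k}_{\Qt}^2-\E\norm{x_k}_{\Qt}^2$ uniformly in $k$), and a careful execution of your H\"older split yields $\Efac$ raised to a fractional power rather than the first power, which only rescales the constant $c$ in the final $\O(tM\exp(-cM^2))$ claim --- a discrepancy already present in the paper's own Cauchy--Schwarz step.
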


\begin{proof}
Let $\{ x_k \}, \{ u_k \}$ denote the state and input trajectories driven by $\pi^K$, and $\{ \xb_k \}, \{ \ub_k \}$ denote the state and input trajectories driven by $\bpi^{K,M,t}$. Then from $\pi^K(x) = Kx$ and $\bpi^{K,M,t}(x) \in \{Kx,0\}$, we have
\begin{align*}
    & J = \lim_{T\to\infty} \E \left[ \frac1T \sum_{k=0}^{T-1} x_k^\top (Q+K^\top RK) x_k \right] = \lim_{T\to\infty}\frac1T\sum_{k=0}^{T-1}\E\left[ x_k^\top(Q+K^\top RK) x_k \right], \\
    & J^{M,t} \leq \limsup_{T\to\infty}  \E \left[ \frac1T \sum_{k=0}^{T-1} \xb_k^\top (Q+K^\top RK) \xb_k \right] = \limsup_{T\to\infty}\frac1T\sum_{k=0}^{T-1}\E\left[ \xb_k^\top(Q+K^\top RK) \xb_k \right].
\end{align*}
Therefore, we only need to prove
\begin{align*}
    & \E\left[ \xb_k^\top(Q+K^\top RK) \xb_k \right] - \E\left[ x_k^\top(Q+K^\top RK) x_k \right] \leq \\& \mkern100mu 2 \Cout\G + \G^2
\end{align*}
for every $k$. Let $\Qt = Q+K^\top RK$. Since $Q \succ 0$ and $R \succ 0$, we have $\Qt \succ 0$, and therefore $\norm{x}_{\Qt} \triangleq \sqrt{x^\top \Qt x}$ defines a norm. In the follows, we bound $\E\norm{\xb_k}_{\Qt} ^2 - \E \norm{x_k}_{\Qt}^2$.

We notice that
\begin{align*}
    \xb_k &= (A+BK) \xb_{k-1} + w_{k-1} - BK \xb_{k-1} \ind{\ub_{k-1} = 0} \\
    &= (A+BK) \left[ (A+BK) \xb_{k-2} + w_{k-2} - BK \xb_{k-2} \ind{\ub_{k-2} = 0} \right] + w_{k-1} - \\&\mkern100mu BK \xb_{k-1}\ind{\ub_{k-1} = 0} \\
    &= \cdots \\
    &= (A+BK)^k x_0 + \sum_{s=0}^{k-1}(A+BK)^s w_{k-s-1} - \sum_{s=0}^{k-1} (A+BK)^sBK\xb_{k-s-1}\ind{\ub_{k-s-1} = 0} \\
    &= x_k - \sum_{s=0}^{k-1} (A+BK)^sBK \xb_{k-s-1} \ind{\ub_{k-s-1}}.
\end{align*}
Hence,
\begin{align*}
    \norm{\xb_k}_{\Qt} &\leq \norm{x_k}_{\Qt} + \sum_{s=0}^{k-1} \norm{(A+BK)^s BK \xb_{k-s-1} \ind{\ub_{k-s-1}}}_{\Qt} \\
    &\leq \norm{x_k}_{\Qt} + \norm{\Qt}\norm{BK}\sum_{s=0}^{k-1} \norm{(A+BK)^s}\norm{\xb_{k-s-1} \ind{\ub_{k-s-1}}}.
\end{align*}
From the fact that $\E(X_1 + \cdots + X_n)^2 \leq \left(\sqrt{\E X_1^2} + \cdots + \sqrt{\E X_n^2}\right)^2$, where $X_1,\ldots,X_n$ are arbitrary random variables with bounded second-order moments, we have
\begin{align*}
    \E \norm{\xb_k}_{\Qt}^2 \leq \left( \sqrt{\E\norm{x_k}_{\Qt}^2} + \norm{\Qt}\norm{BK}\sum_{s=0}^{k-1}\norm{(A+BK)^s} \sqrt{\E\left[\norm{\xb_{k-s-1}}^2 \ind{\ub_{k-s-1} = 0}\right]}\right)^2.
\end{align*}
By Cauchy-Schwarz inequality, we have
\begin{align*}
    \E \norm{\xb_k}_{\Qt}^2 \leq \left( \sqrt{\E\norm{x_k}_{\Qt}^2} + \norm{\Qt}\norm{BK}\sum_{s=0}^{k-1}\norm{(A+BK)^s} \left(\E\norm{\xb_{k-s-1}}^4\right)^{\frac14} \P\left({\ub_{k-s-1} = 0}\right)\right)^2.
\end{align*}
By Lemma~\ref{lemma:Ex4}, we have $\E\norm{\xb_{k-s-1}}^4 \leq 8[ \Q + \Wn^2\kappa(P)^2 ]$ for any $s$, and hence,
\begin{align}
    \E \norm{\xb_k}_{\Qt}^2 & \leq \left( \sqrt{\E\norm{x_k}_{\Qt}^2} + \norm{\Qt}\norm{BK}2^{3/4}[ \Q + \Wn^2\kappa(P)^2 ]^{1/4}\right. \nonumber \\ & \mkern200mu \left.\sum_{s=0}^{k-1}\norm{(A+BK)^s} \P\left({\ub_{k-s-1} = 0}\right)\right)^2. \label{eq:ExQt2}
\end{align}
According to the update rule of $\xi_k$ in $\bpi^{K,M,t}$, we have
\begin{align*}
    \left\{\ub_k = 0\right\} \subseteq \bigcup_{\tau = 0}^{t-1}\left\{ \norm{\xb_{k-t}} \geq M \right\}.
\end{align*}
Taking the union bound, we have
\begin{align*}
    \P\left(\ub_k = 0\right) \leq \sum_{\tau = 0}^{t-1}\P\left( \norm{\xb_{k-t}} \geq M \right).
\end{align*}
By Lemma~\ref{lemma:escape}, we have
\begin{align*}
    \P\left( \norm{\xb_{k-t}} \geq M \right) \leq \frac{2^{n/2+1}}{\rho^{-1/2} - 1} \Efac
\end{align*}
for every $k$ and $t$, and hence
\begin{align}
    \P\left(\ub_k = 0\right) \leq t\frac{2^{n/2+1}}{\rho^{-1/2} - 1}\Efac
    \label{eq:Pubk}
\end{align}
for every $k$. Substituting~\eqref{eq:Pubk} into~\eqref{eq:ExQt2}, we get
\begin{align*}
    \E \norm{\xb_k}_{\Qt}^2 & \leq \left( \sqrt{\E\norm{x_k}_{\Qt}^2} + \norm{\Qt}\norm{BK}\frac{2^{n/2+7/4}}{\rho^{-1/2} - 1}[ \Q + \Wn^2\kappa(P)^2 ]^{1/4}\right. \nonumber \\ & \mkern200mu \left.\sum_{s=0}^{k-1}\norm{(A+BK)^s} \Efac\right)^2 \\
    &\leq \left( \sqrt{\E\norm{x_k}_{\Qt}^2} + \Cin [ \Q + \Wn^2\kappa(P)^2 ]^{1/4} \Efac \right)^2 \\
    &= \left( \sqrt{\E\norm{x_k}_{\Qt}^2} + \G \right)^2 \\
    &= \E\norm{x_k}_{\Qt}^2 + 2\sqrt{\E\norm{x_k}_{\Qt}^2} \G + \G^2.
\end{align*}
Applying Lemma~\ref{lemma:EV} with $M=0$, we have
\begin{align*}
    \E [x_k^\top P x_k] \leq \frac{\Wn \norm{P}}{1 - \rho},
\end{align*}
and hence
\begin{align*}
\E\norm{x_k}_{\Qt}^2 \leq \norm{\Qt}\norm{x_k}^2 \leq \frac{\Wn \kappa(P)\norm{Q}}{1 - \rho} = \Cout^2.
\end{align*}
Therefore,
\begin{align*}
    \E\norm{\xb_k}_{\Qt} ^2 - \E \norm{x_k}_{\Qt}^2 \leq 2 \Cout\G + \G^2,
\end{align*}
and hence
\begin{align*}
    J^{M,t} - J \leq 2 \Cout\G + \G^2,
\end{align*}
which concludes our proof.
\end{proof}
\subsubsection{A law of large numbers for martingales}\label{sec:lln}
Let $\left\{ \F_k \right\}$ be a filtration of $\sigma$-algebras and $\left\{ S_k \right\}$ be a matrix-valued stochastic process adapted to $\left\{ \F_k \right\}$, we call $\left\{ S_k \right\}$ a matrix-valued martingale (with respect to the filtration $\left\{ \F_k \right\}$) if $\E[ S_{k+1} \mid \F_k ] = S_k$ holds for all $k$.
Now we can state a law of large numbers for matrix-valued martingales:

\begin{lemma} \cite[Lemma 4]{watermark}
If $S_k = \Phi_0 + \Phi_1 + \cdots + \Phi_k$ is a matrix-valued martingale such that
\begin{equation*}
    \E \norm{\Phi_k}^2 \sim \C(\beta),
\end{equation*}
where $0\leq \beta < 1$, then $S_k / k$ converges to 0 almost surely. Furthermore,
\begin{equation*}
\frac{S_{k}}{k} \sim \mathcal{C}\left(\frac{\beta-1}{2}\right).
\end{equation*}
\label{lemma:martingale}
\end{lemma}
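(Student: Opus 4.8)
The plan is to prove this as a strong law of large numbers for the $L^2$-bounded martingale obtained by weighting the increments, reading off the rate from the summability exponent. First I would observe that the martingale hypothesis $\E[S_{k+1}\mid \F_k] = S_k$ is equivalent to the statement that the increments $\Phi_k = S_k - S_{k-1}$ are martingale differences, i.e.\ $\E[\Phi_k \mid \F_{k-1}] = 0$, and each has a finite second moment since $\E\norm{\Phi_k}^2 \sim \C(\beta)$ is a finite deterministic bound for every fixed $k$. Because the class $\C(\beta)$ is invariant under the equivalence of the spectral and Frobenius norms (up to dimension-dependent constants), I would work with the Frobenius inner product, under which the matrices form a finite-dimensional Hilbert space; this lets me invoke the Hilbert-space $L^2$ martingale convergence theorem directly, or equivalently argue coordinatewise.

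The core step is as follows. Fix $\epsilon > 0$ and set the normalizing sequence $b_k = k^{(1+\beta)/2 + \epsilon}$, which increases to infinity because $\beta \geq 0$. Define the reweighted partial sums $M_k = \sum_{i=1}^k \Phi_i / b_i$, which is again a martingale. Since the martingale differences are orthogonal in $L^2$ (for $i \neq j$ one has $\E\langle \Phi_i, \Phi_j\rangle = 0$ by conditioning on the earlier $\sigma$-algebra), its second moment telescopes into a sum:
\[
\E\norm{M_k}_F^2 = \sum_{i=1}^k \frac{\E\norm{\Phi_i}_F^2}{b_i^2}.
\]
Now I would feed in the hypothesis $\E\norm{\Phi_i}^2\sim\C(\beta)$: for any $\epsilon' > 0$ it gives $\E\norm{\Phi_i}^2 = \O(i^{\beta+\epsilon'})$, so each summand is $\O\!\left( i^{\beta+\epsilon' - (1+\beta) - 2\epsilon}\right) = \O\!\left(i^{-1-2\epsilon+\epsilon'}\right)$. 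Choosing $\epsilon' < 2\epsilon$ (for instance $\epsilon' = \epsilon$) makes the exponent strictly below $-1$, so the series converges and $\sup_k \E\norm{M_k}_F^2 < \infty$. Hence $M_k$ is $L^2$-bounded and converges almost surely to a finite limit.

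From here I would apply Kronecker's lemma: since $\sum_i \Phi_i / b_i$ converges almost surely and $b_k \uparrow \infty$, it follows that $S_k / b_k = \tfrac{1}{b_k}\sum_{i \leq k}\Phi_i \to 0$ almost surely (the stray $\Phi_0$ term is harmless as $b_k\to\infty$). Unpacking $b_k = k^{(1+\beta)/2+\epsilon}$, this reads $(S_k/k)\, k^{-[(\beta-1)/2 + \epsilon]} \to 0$ almost surely, which is precisely the $\C\!\left((\beta-1)/2\right)$ conclusion for this particular $\epsilon$. To obtain the statement for \emph{all} $\epsilon$ on a single probability-one event, I would run the argument along a countable sequence $\epsilon_n \downarrow 0$ and intersect the resulting null sets; the interpolation $k^{\epsilon_n - \epsilon}\to 0$ then upgrades the conclusion to every $\epsilon > 0$.

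The main obstacle is not the martingale machinery but the careful bookkeeping of the two $\epsilon$'s: the hypothesis is an $\O(k^{\beta+\epsilon'})$ bound for every $\epsilon'$, while the target is a bound at the rate $k^{(\beta-1)/2+\epsilon}$ for every $\epsilon$, so one has to interleave them correctly (choosing $\epsilon'$ small relative to the target $\epsilon$) to land exactly on the claimed exponent rather than losing a factor. Everything else — orthogonality of the increments, $L^2$-boundedness of $M_k$, and the Kronecker step — is routine once the exponents are aligned.
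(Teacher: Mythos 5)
Your proof is correct. The paper itself does not prove this lemma --- it is imported as Lemma~4 of the cited watermarking reference --- so there is no internal proof to compare against; what you have written is a sound, self-contained substitute for that external citation. Your route is the classical one for martingale strong laws with rates: orthogonality of martingale differences under the Frobenius inner product gives $\E\norm{M_k}_F^2 = \sum_{i\le k}\E\norm{\Phi_i}_F^2/b_i^2$; the choice $b_k = k^{(1+\beta)/2+\epsilon}$ together with the hypothesis $\E\norm{\Phi_k}^2 \sim \C(\beta)$ (instantiated with $\epsilon' = \epsilon$) makes this series summable; Doob's convergence theorem for $L^2$-bounded martingales gives almost sure convergence of the weighted series; and Kronecker's lemma converts this into $S_k/b_k \to 0$ almost surely, which after unwinding exponents is exactly the statement $(S_k/k)/k^{(\beta-1)/2+\epsilon}\to 0$, with $S_k/k\to 0$ following since $(\beta-1)/2<0$ when $\beta<1$. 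The two points that genuinely require care --- interleaving the hypothesis-side $\epsilon'$ with the target-side $\epsilon$ so as not to lose a polynomial factor, and passing from ``for each $\epsilon$, a.s.'' to a single null set via a countable sequence $\epsilon_n \downarrow 0$ --- are both handled correctly; the latter is in fact slightly stronger than what the paper's definition of $\C(\alpha)$ demands, since there the exceptional null set is permitted to depend on $\epsilon$.
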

\subsubsection{A perturbation analysis of algebraic Riccati equation}
An interesting property of LQR is that near the optimal controller (which corresponds to the solution of the discrete algebraic Riccati equation), the perturbation of the discrete Lyapunov equation solution is quadratic in the perturbation of controller gain. Basically, this results in the convergence speed of the certainty equivalent control performance being twice that of the estimation error. Similar results have been reported in~\citep{mania19,simchowitz20}, and here we present a simplified version that would suffice for our purpose.

\begin{lemma}
Consider the discrete Lyapunov equation
\begin{align*}
    A^\top XA - X + Q = 0,
\end{align*}
where $Q\succ 0$ and $\rho(A) < 1$, then the unique positive definite solution $X$ satisfies
\begin{align*}
    \norm{X}_F \leq \norm{\left( I - A^\top \otimes A^\top \right)^{-1}}_2 \norm{Q}_F.
\end{align*}
\label{lemma:lyap_fp_norm_bound}
\end{lemma}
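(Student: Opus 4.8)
The plan is to convert the matrix Lyapunov equation into an equivalent linear system via vectorization, and then read off the norm bound directly from the closed-form solution. First I would rewrite the equation in the form $X - A^\top X A = Q$ and apply the vectorization operator to both sides. Using the standard Kronecker identity $\vect(MXN) = (N^\top \otimes M)\vect(X)$ with $M = A^\top$ and $N = A$, the term $A^\top X A$ vectorizes to $(A^\top \otimes A^\top)\vect(X)$, so the equation becomes the linear system
\begin{equation*}
\left( I - A^\top \otimes A^\top \right)\vect(X) = \vect(Q).
\end{equation*}

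Next I would verify that the coefficient matrix $I - A^\top \otimes A^\top$ is invertible, which is what makes the inverse in the statement well-defined. The eigenvalues of $A^\top \otimes A^\top$ are exactly the pairwise products $\lambda_i \lambda_j$ of the eigenvalues $\{\lambda_i\}$ of $A$ (recalling that $A^\top$ and $A$ share the same spectrum), so $\rho(A^\top \otimes A^\top) = \rho(A)^2 < 1$ whenever $\rho(A) < 1$. Consequently $1$ is not an eigenvalue of $A^\top \otimes A^\top$, and $I - A^\top \otimes A^\top$ is nonsingular; this is also the standard reason the Lyapunov equation admits a unique solution. Solving then gives $\vect(X) = (I - A^\top \otimes A^\top)^{-1}\vect(Q)$.

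Finally I would take Euclidean norms, using that vectorization is an isometry between the Frobenius and Euclidean norms, i.e.\ $\norm{\vect(X)} = \norm{X}_F$ and $\norm{\vect(Q)} = \norm{Q}_F$. Applying submultiplicativity of the induced $2$-norm yields
\begin{equation*}
\norm{X}_F = \norm{\vect(X)} \leq \norm{\left(I - A^\top \otimes A^\top\right)^{-1}}_2 \norm{\vect(Q)} = \norm{\left(I - A^\top \otimes A^\top\right)^{-1}}_2 \norm{Q}_F,
\end{equation*}
which is exactly the claimed bound. There is no serious obstacle here: the only points requiring care are getting the order of factors in the Kronecker identity right (so that $A^\top X A$, rather than $A X A^\top$, produces $A^\top \otimes A^\top$) and confirming the spectral-radius computation that guarantees invertibility. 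Positive definiteness of $X$ plays no role in the bound itself and is only used, together with $Q \succ 0$, to single out the relevant solution.
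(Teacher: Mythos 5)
Your proposal is correct and follows essentially the same route as the paper: vectorize the Lyapunov equation to get $\vect(X) = (I - A^\top \otimes A^\top)^{-1}\vect(Q)$, then use the Frobenius--Euclidean isometry and submultiplicativity of the induced $2$-norm. The only difference is that you explicitly justify invertibility of $I - A^\top \otimes A^\top$ via the Kronecker spectrum, a step the paper leaves implicit.
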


\begin{proof}
Vectorizing the Lyapunov equation, we get
\begin{align*}
    \vect(X) = \left( I - A^\top \otimes A^\top \right)^{-1} \vect(Q),
\end{align*}
and hence
\begin{align*}
    \norm{X}_F = \norm{\vect(X)}_2 \leq \norm{\left( I - A^\top \otimes A^\top \right)^{-1}}_2 \norm{\vect(Q)}_2 = \norm{\left( I - A^\top \otimes A^\top \right)^{-1}}_2 \norm{Q}_F.
\end{align*}
\end{proof}

\begin{lemma}
Let $P$ be the solution to the discrete algebraic Riccati equation
\begin{align*}
        P=Q+A^{\top} P A-A^{\top} P B\left(R+B^{\top} P B\right)^{-1} B^{\top} P A,
\end{align*}
and let $K$ be defined as
\begin{align*}
    K=-\left(R+B^{\top} P B\right)^{-1} B^{\top} P A.
\end{align*}
Assume $\Khat$ is such that $\tilde{A} = A + B\Khat$ is stable, and $\Phat$ satisfies the discrete Lyapunov equation
\begin{align}
    \Phat = Q + \Khat^\top R \Khat + \tilde{A}^\top \Phat \tilde{A}.
    \label{eq:lyap_Phat}
\end{align}
Let $\Delta P = \Phat - P, \Delta K = \Khat - K$, then
\begin{align*}
    \norm{\Delta P}_F \leq \norm{\left( I - \tilde{A}^\top \otimes \tilde{A}^\top \right)^{-1}}_2 \norm{R}_F \norm{\Delta K}_F^2.
\end{align*}
\label{lemma:riccati_sensitivity}
\end{lemma}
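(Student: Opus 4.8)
The plan is to show that $\Delta P$ itself satisfies a discrete Lyapunov equation driven by $\tilde{A}$ whose forcing term is manifestly quadratic in $\Delta K$, and then to read off the Frobenius-norm bound directly from Lemma~\ref{lemma:lyap_fp_norm_bound}. The entire difficulty is concentrated in producing that Lyapunov equation; once it is in hand, the estimate is a one-line application of the vectorization bound together with submultiplicativity of the Frobenius norm.

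First I would convert the Riccati equation for $P$ into the closed-loop Lyapunov form associated with the \emph{optimal} gain. Substituting $K = -(R + B^\top P B)^{-1} B^\top P A$ into the Riccati equation and completing the square yields $P = Q + K^\top R K + (A+BK)^\top P (A+BK)$. The same manipulation records the first-order optimality identity $(R + B^\top P B)K = -B^\top P A$, equivalently $B^\top P (A + BK) = -RK$. This identity is the linchpin of the argument: it is precisely the statement that the gradient of the LQR cost with respect to the gain vanishes at $K$, and it is what will force the first-order perturbation terms to disappear.

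Next I would subtract this closed-loop equation from the given equation $\Phat = Q + \Khat^\top R \Khat + \tilde{A}^\top \Phat \tilde{A}$. Writing $\Phat = P + \Delta P$, $\Khat = K + \Delta K$, and $\tilde{A} = (A+BK) + B\Delta K$, I would expand every product and sort the terms by their order in $\Delta K$. The zeroth-order terms cancel by the closed-loop identity of the previous step; the terms that are first order in $\Delta K$ collect into $\Delta K^\top\big(RK + B^\top P(A+BK)\big)$ and its transpose, both of which vanish by the optimality identity. What survives is the discrete Lyapunov equation $\Delta P = \tilde{A}^\top \Delta P \tilde{A} + \Delta K^\top (R + B^\top P B)\Delta K$, whose forcing term is positive semidefinite and quadratic in $\Delta K$. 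I expect this bookkeeping --- organizing the expansion cleanly and recognizing that the surviving linear terms are exactly the ones annihilated by optimality --- to be the main obstacle; it is what turns a naively linear perturbation into a quadratic one, and everything else is mechanical.

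Finally, since $\tilde{A}$ is stable, I would apply Lemma~\ref{lemma:lyap_fp_norm_bound} with $A \mapsto \tilde{A}$ and $Q \mapsto \Delta K^\top(R + B^\top P B)\Delta K$, obtaining $\norm{\Delta P}_F \le \norm{(I - \tilde{A}^\top \otimes \tilde{A}^\top)^{-1}}_2 \,\norm{\Delta K^\top(R + B^\top P B)\Delta K}_F$. Bounding the forcing term by submultiplicativity, $\norm{\Delta K^\top(R + B^\top P B)\Delta K}_F \le \norm{R + B^\top P B}_F \,\norm{\Delta K}_F^2$, then delivers the claimed quadratic dependence of $\norm{\Delta P}_F$ on $\norm{\Delta K}_F$, which is the stated form of the bound.
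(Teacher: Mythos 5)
Your argument follows the paper's own route step for step: recast the Riccati equation as the closed-loop Lyapunov equation $P = Q + K^\top R K + (A+BK)^\top P(A+BK)$, subtract it from the perturbed Lyapunov equation, cancel the first-order terms via the optimality identity $\left(R+B^\top P B\right)K + B^\top P A = 0$, and finish with Lemma~\ref{lemma:lyap_fp_norm_bound} plus submultiplicativity. The one place where you and the paper part ways is the surviving quadratic term, and there \emph{your} algebra is the correct one. The paper asserts that the forcing term is $\Delta K^\top R\, \Delta K$, silently dropping the term $\Delta K^\top B^\top P B\, \Delta K$ that arises from expanding $\tilde{A}^\top P \tilde{A}$ with $\tilde{A} = (A+BK) + B\Delta K$; you correctly retain it and obtain
\begin{equation*}
    \Delta P = \tilde{A}^\top \Delta P\, \tilde{A} + \Delta K^\top\left(R + B^\top P B\right)\Delta K.
\end{equation*}
As a consequence, the bound you actually prove carries the constant $\norm{R+B^\top P B}_F$ rather than $\norm{R}_F$, and your closing claim that this ``is the stated form of the bound'' is the only slip in your write-up: since $B^\top P B \succeq 0$ implies $\norm{R+B^\top P B}_F \geq \norm{R}_F$, what you prove is strictly weaker than the lemma as stated. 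In fact the lemma as stated is false: take $n=p=1$, $A=0$, $B=Q=R=1$, so that $P=1$ and $K=0$; with $\Khat = \delta$ one computes $\Phat = (1+\delta^2)/(1-\delta^2)$, hence $\Delta P = 2\delta^2/(1-\delta^2)$, which exceeds the stated bound $\delta^2/(1-\delta^2)$ but equals your bound exactly. So your version is the one that should stand. The discrepancy is harmless downstream: the proof of Theorem~\ref{thm:converge_J} only uses $\norm{\Delta P}_F \sim \O\left(\norm{\Delta K}_F^2\right)$ with some fixed finite constant, which both versions supply.
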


\renewcommand*{\proofname}{Proof}

\begin{proof}
It can be shown $P$ satisfies the discrete Lyapunov equation
\begin{align}
    P = Q + K^\top R K + (A+BK)^\top P (A+BK).
    \label{eq:lyap_P}
\end{align}
Meanwhile, substituting $\Delta P = \Phat - P, \Delta K = \Khat - K$ into~\eqref{eq:lyap_Phat}, we get
\begin{align}
& P + \Delta P \nonumber \\&= Q + \left( K + \Delta K \right)^\top R \left( K + \Delta K \right) + \left( A + B \left( K + \Delta K \right) \right)^\top \left( P + \Delta P \right)\left( A + B \left( K + \Delta K \right) \right) \nonumber\\
&= Q + K^\top R K + \Delta K^\top R \Delta K + (A+BK)^\top P (A+BK) + \tilde{A}^\top \Delta P  \tilde{A},
\label{eq:lyap_P_DeltaP}
\end{align}
where for the second equality we used the fact
\begin{align*}
    \left( R + B^\top PB \right) K + B^\top PA = 0.
\end{align*}
By taking the difference between~\eqref{eq:lyap_P_DeltaP} and~\eqref{eq:lyap_P}, we can see $\Delta P$ also satisfies a discrete Lyapunov equation:
\begin{align*}
    \Delta P = \Delta K^\top R \Delta K + \tilde{A}^\top \Delta P  \tilde{A}.
\end{align*}
Applying Lemma~\ref{lemma:lyap_fp_norm_bound}, we obtain
\begin{align*}
    \norm{\Delta P}_F & \leq \norm{\left( I - \tilde{A}^\top \otimes \tilde{A}^\top \right)^{-1}}_2 \norm{\Delta K^\top R \Delta K}_F \\
    & \leq \norm{\left( I - \tilde{A}^\top \otimes \tilde{A}^\top \right)^{-1}}_2 \norm{R}_F \norm{\Delta K}_F^2,
\end{align*}
which concludes our proof.
\end{proof}

\subsection{Proof of Theorem 1}
\begin{proof}
According to our definition of \revise{bounded-cost safety}, we only need to prove $J^{\pi_k} < +\infty$ for every $k$, where $J^{\pi_k}$ is defined in~\eqref{eq:J_pi}.

Let $\{x_i \}, \{u_i\}$ denote the state and input trajectories driven by $\pi_k$. According to Algorithm~\ref{alg:policy}, we have $u_i \in \{\Khat_k x_i, 0\}$, and therefore,
\begin{align*}
    J^{\pi_k} \leq \limsup_{T\to\infty}  \E \left[ \frac1T \sum_{i=0}^{T-1} x_i^\top \left(Q+\Khat_k^\top R\Khat_k\right) x_i \right] = \limsup_{T\to\infty}\frac1T\sum_{i=0}^{T-1}\E\left[ x_i^\top\left(Q+\Khat_k^\top R\Khat_k\right) x_i \right].
\end{align*}
Notice that under $\pi_k$, there is $x_{i+1} = Ax_i + w_i$ as long as $\norm{x_i} \geq \log k$, and since $A$ is stable, there exists $0<\rho<1$ and $P \succ 0$ such that $A^\top PA \prec \rho P$. By Lemma~\ref{lemma:EV}, we have
\begin{align*}
    \E \left[ x_i^\top P x_i \right] < \frac{((\log k)^2\An^2 + \norm{W})\norm{P}}{1 - \rho},
\end{align*}
where $\An = \max\{ \norm{A}, \|{A + B\Khat_k} \| \}$. Therefore, by $x_i^\top P x_i \geq \lambda_{\min}(P)\norm{x_i}^2$ and $x_i^\top\left(Q+\Khat_k^\top R\Khat_k\right) x_i \leq \|Q+\Khat_k^\top R\Khat_k\| \norm{x_i}^2$, we have
\begin{align*}
    \E\left[ x_i^\top\left(Q+\Khat_k^\top R\Khat_k\right) x_i \right] < \frac{((\log k)^2\An^2 + \norm{W})\kappa({P})\|Q+\Khat_k^\top R\Khat_k\|}{1 - \rho}.
\end{align*}
This implies
\begin{align*}
    J^{\pi_k} \leq \frac{((\log k)^2\An^2 + \norm{W})\kappa({P})\|Q+\Khat_k^\top R\Khat_k\|}{1 - \rho}<+\infty,
\end{align*}
which concludes our proof.
\end{proof}
\subsection{Proof of Theorem 2}
\label{sec:proof2}

\begin{proof}

We shall assume throughout the proof that $\left\{ \F_k \right\}$ is the $\sigma$-algebra generated by the random variables $\left\{ x_0, w_0, \ldots, w_k, \zeta_0, \ldots, \zeta_k \right\}$.

We first cast the system~\eqref{eq:dynamics} into a static form by writing $x_k$ as
\begin{align}
     x_k &= A^k x_0 + \sum_{t=0}^{k-1} A^t B u_{k-t - 1} + \sum_{t=0}^{k-1}A^t w_{k-t-1} \nonumber\\
     & = A^k x_0 + \sum_{t=0}^{k-1} H_t \left[ \ut_{k-t - 1} + ({k-t})^{-\beta} \zeta_{k-t - 1} \right] + \sum_{t=0}^{k-1}A^t w_{k-t-1}.
    \label{eq:static_substituted}
\end{align}
In order to estimate $H_\tau$, post-multiply both sides of~\eqref{eq:static_substituted} with $\zeta_{k - \tau - 1}^\top$ and rearrange the terms, and we get
\begin{align}
    x_k \zeta_{k - \tau - 1}^\top = \left( \sum_{t=0}^{k-1}A^t w_{k-t-1} +A^k x_0 \right) \zeta_{k - \tau - 1}^\top +  \sum_{t=0}^{k-1} ({k-t})^{-\beta} H_t\zeta_{k-t - 1} \zeta_{k - \tau - 1}^\top + \nonumber\\  \sum_{t=0}^{k-1} H_t \ut_{k-t - 1} \zeta_{k - \tau - 1}^\top.
\label{eq:intuition}
\end{align}
To take the expectation of~\eqref{eq:intuition}, notice that for any time step $k$, the variables $x_0, \zeta_0, \zeta_1, \ldots, \zeta_k$, $w_0, w_1, \ldots, w_k, \ut_0, \ut_1, \ldots, \ut_{k+1}$ are measurable w.r.t. $\F_k$, which, together with the independence among $x_0, \zeta_0, \zeta_1, \ldots, \zeta_k, w_0, w_1, \ldots, w_k$, leads to the following relations:
\begin{align}
  & \E\left[ \zeta_{k_1} \zeta_{k_2}^\top \mid \F_{k_2 - 1} \right] = 
  \begin{cases} 
    I & \text{if } k_1 = k_2 \\
    0 & \text{otherwise},
  \end{cases} 
  \label{eq:zeta_zeta}
  \\
  & \E\left[ \ut_{k_1} \zeta_{k_2}^\top \mid \F_{k_2-1} \right] = \begin{cases}
  0 & \text{if } k_1 \leq k_2 \\
  \text{other values} & \text{otherwise},
  \end{cases}
  \label{eq:u_zeta}
  \\
  &\E\left[ w_{k_1} \zeta_{k_2}^\top \mid \F_{k_2-1} \right] = 0, 
  \label{eq:w_zeta} \\
  & \E\left[ x_0 \zeta_{k_2}^\top \mid \F_{k_2-1} \right] = 0,
  \label{eq:x0_zeta}
\end{align}
for any two time steps $k_1 \geq 0$ and $k_2 \geq 1$.

Now let us prove the conclusion using induction on $\tau$.

First consider the case $\tau = 0$: we have
\begin{equation}
\Hhat_{k,0} - H_0 = \frac1k \sum_{i=1}^k \left[(i+1)^\beta x_i \zeta_{i-1}^\top - H_0 \right]= \frac1k \sum_{i=0}^{k-1} \left[ (i+2)^\beta x_{i+1} \zeta_{i}^\top - H_0\right].
\label{eq:H0}
\end{equation}
Let $\Phi_k = (k+1)^\beta x_{k+1} \zeta_{k}^\top - H_0$. By substituting~\eqref{eq:intuition} and~\eqref{eq:zeta_zeta} to \eqref{eq:x0_zeta} into~\eqref{eq:H0} it can be seen $\E[ \Phi_{k} \mid \F_{k-1} ]= 0$, and hence $S_k \triangleq \sum_{i=0}^k \Phi_i$ is a martingale w.r.t. $\{\F_k\}$. Furthermore, we can verify $\E\norm{\Phi_k}^2 \sim \C(2\beta)$: by Cauchy-Schwarz inequality,
\begin{equation*}
    \E\norm{(k+1)^\beta x_{k+1}\zeta_k^\top}^2 \leq (k+1)^{2\beta} \sqrt{\E\norm{x_{k+1}}^4} \sqrt{\E\norm{\zeta_k}^4}.
\end{equation*}
By the procedure described in Algorithm~\ref{alg:policy}, the switching threshold is no larger than $\log k$ for every step before $k$, and therefore, according to Lemma~\ref{lemma:Ex4},
\begin{align*}
    \E \norm{x_k}^4 < 8\left[ \mathcal{Q}(\log k, \rho, P) + \Wn^2 \kappa(P)^2 \right] \sim \O\left( (\log k) ^ 8 \right) \sim \C(0),
\end{align*}
where $0<\rho<1$ and $P\succ 0$ are constants such that $A^\top PA \prec \rho P$, which exist due to the stability of $A$, $\mathcal{Q}$ is defined as in Lemma~\ref{lemma:Ex4}, and $\Wn$ is defined in~\eqref{eq:forcedsys3}. Also taking note of the fact $\E\norm{\zeta_k}^4 = p(p+2) \sim \C(0)$, we have
\begin{align*}
    \E\norm{(k+1)^\beta x_{k+1}\zeta_k^\top}^2 \sim \C(2\beta),
\end{align*}
and hence,
\begin{align*}
    \E\norm{\Phi_k}^2 &= \E\norm{(k+1)^\beta x_{k+1}\zeta_k^\top - H_0}^2 \leq \E \left( \norm{(k+1)^\beta x_{k+1}\zeta_k^\top} + \norm{H_0} \right)^2 \\
    & \leq 2\left( \E\norm{(k+1)^\beta x_{k+1}\zeta_k^\top}^2 + \norm{H_0}^2 \right) \sim \C( 2\beta).
\end{align*}
By applying Lemma~\ref{lemma:martingale} to the martingale $\{S_k\}$ defined above, we get
\begin{equation*}
    \Hhat_{k,0} - H_0 \sim \C\left(\beta - \frac12 \right).
\end{equation*}

Now assume that $\tau \geq 1$ and that we already have
\begin{equation}
    \Hhat_{k,t} - H_{t} \sim \C\left(\beta - \frac12 \right)
    \label{eq:Hkt}
\end{equation}
for $t = 0,1, \ldots, \tau - 1$. Then
\begin{align}
&\Hhat_{k,\tau} - H_{k,\tau} \nonumber \\
&= \frac{1}{k - \tau}\sum_{i=\tau + 1}^k \left\{  (i-\tau)^\beta \left[x_i - \sum_{t = 0}^{\tau - 1} \Hhat_{k, t} \ut_{i-t-1}  \right] \zeta_{i-\tau-1}^\top - H_{k,\tau} \right\} \nonumber \\
&= \frac{1}{k - \tau}\sum_{i=0}^{k-\tau-1} \left\{  (i+1)^\beta \left[x_{i+\tau+1} - \sum_{t = 0}^{\tau - 1} \Hhat_{k, t} \ut_{i+\tau-t}  \right] \zeta_{i}^\top - H_{k,\tau} \right\} \nonumber \\
&= \frac{1}{k - \tau}\sum_{i=0}^{k-\tau-1} \left\{  (i+1)^\beta \left[x_{i+\tau+1} - \sum_{t = 0}^{\tau - 1} H_{t} \ut_{i+\tau-t}  \right] \zeta_{i}^\top - H_{k,\tau} \right\} - \nonumber \\
&\mkern108mu  \sum_{t=0}^{\tau - 1} \left( \Hhat_{k,t} - H_{t} \right) \frac{1}{k-\tau} \sum_{i=0}^{k - \tau - 1} (i+1)^\beta  \ut_{i+\tau-t}\zeta_i^\top.
\label{eq:Hktau_expanded}
\end{align}
Completely similarly to the case $\tau = 0$, we can show
\begin{equation*}
    \frac{1}{k - \tau}\sum_{i=0}^{k-\tau-1} \left\{  (i+1)^\beta \left[x_{i+\tau+1} - \sum_{t = 0}^{\tau - 1} H_{t} \ut_{i+\tau-t}  \right] \zeta_{i}^\top - H_{k,\tau} \right\} \sim \C\left( \beta - \frac12 \right).
\end{equation*}
Meanwhile, for each of $t = 0,1,\ldots, \tau-1$, define $\Phi^t_k = (k+1)^\beta \ut_{k-t}\zeta_k^\top$. In view of~\eqref{eq:u_zeta} it can be shown $\E[ \Phi^t_k \mid \F_{k-1} ] = 0$, and hence $S^t_k \triangleq \sum_{i=0}^k \Phi^t_k$ is a martingale w.r.t. $\{\F_k\}$. Furthermore, by the procedure described in Algorithm~\ref{alg:policy}, $\norm{\ut_{k-t}}^2 \leq (\log(k-t))^4 \leq (\log k)^4$, we have
\begin{equation*}
    \E\norm{\Phi^t_k}^2 \leq (k+1)^{2\beta} \left( \log k \right)^4 \E \norm{\zeta_k}^2  \sim \C(2\beta).
\end{equation*}
By applying Lemma~\ref{lemma:martingale} to the martingale $S^t_k$ defined above, we get
\begin{equation*}
    \frac{1}{k-\tau} \sum_{i=0}^{k - \tau - 1} (i+1)^\beta  \ut_{i+\tau-t}\zeta_i^\top \sim \C\left( \beta - \frac12 \right),
\end{equation*}
which together with~\eqref{eq:Hkt} implies
\begin{equation*}
  \left( \Hhat_{k,t} - H_{t} \right) \frac{1}{k-\tau} \sum_{i=0}^{k - \tau - 1} (i+1)^\beta  \ut_{i+\tau-t}\zeta_i^\top  \sim \C\left( 2\left( \beta - \frac12 \right) \right) \sim \C\left( \beta - \frac12 \right).
\end{equation*}
Now that we have shown the RHS of~\eqref{eq:Hktau_expanded} is the sum of $\tau + 1$ matrices, each of order $\C\left( \beta - \frac12 \right)$, we get
\begin{equation*}
    \Hhat_{k,\tau} - H_{k,\tau} \sim \C\left( \beta - \frac12 \right).
\end{equation*}
According to our definition of $\C(\alpha)$, this is to say it holds almost surely that
\begin{align*}
    \lim_{k\to\infty} \frac{\Hhat_{k,\tau} - H_\tau}{k^{-\gamma + \epsilon}} = 0,
\end{align*}
where $\gamma = 1/2 - \beta > 0$, for any $\epsilon > 0$, which concludes our proof.
\end{proof}
\subsection{Proof of Theorem 3}
\label{sec:proof3}
Let us denote by $J_W^\pi$ the cost of a policy $\pi$ when acting on a system in the form~\eqref{eq:dynamics} with process noise covariance $W$. Let us consider the following variants of policies:
\begin{itemize}
    \item $\pi^*$: optimal policy, i.e., $\pi^*(x) = K^* x$, with $K^*$ defined in~\eqref{eq:K}.
    \item $\pihat_k$: certainty equivalent policy with no exploratory noise, i.e., $\pihat_k(x) = \Khat_k x$.
    \item $\pit_k$: safe switching policy with no exploratory noise, i.e., Algorithm~\ref{alg:policy} invoked as $\pi(x,\xi; k,\Khat_k, +\infty)$.
    \item $\pi_k$: safe switching policy with exploratory noise, i.e., the actually applied policy at step $k$.
\end{itemize}
Notice that $J^{\pi_k} = J^{\pi_k}_W$ and $J^* = J^{\pi^*}_W$. Our plan is decomposing $J^{\pi_k} - J^*$ as
\begin{align*}
    J^{\pi_k} - J^* &= \left(J^{\pi_k}_W - J^{\pit_k}_{W+(k+1)^{-2\beta}I} \right) + \left(J^{\pit_k}_{W+(k+1)^{-2\beta}I} - J^{\pihat_k}_{W+(k+1)^{-2\beta}I} \right) + \\
    &\left(J^{\pihat_k}_{W+(k+1)^{-2\beta}I} - J^{\pihat_k}_W \right) + 
    \left( J^{\pihat_k}_W - J^{\pi^*}_W \right),
\end{align*}
and bounding the RHS terms respectively. We next tackle these terms in reverse order:

\begin{enumerate}
    \item $J^{\pihat_k}_W - J^{\pi^*}_W$: for an arbitrary stabilizing linear feedback policy $\pi(x) = Kx$, we know the cost is $J_W^\pi = \Tr(WP)$, where $P$ solves the discrete Lyapunov equation~\eqref{eq:lyap_P}.
    By Theorem~\ref{thm:converge_H}, the Markov parameter estimates converge as $\C(-\gamma)$. Since with random input in Algorithm~\ref{alg:dd_lsq}, the matrix $\begin{bmatrix} \Uhori \\ \Xhorizero \end{bmatrix}$ is full row rank, it follows that the pseudo-inverse operator is differentiable, which together with Lemma~\ref{lemma:function_class} guarantees that $\Ahat_k - A, \Bhat_k - B \sim \C(-\gamma)$. Due to the differentiability of the stabilizing solution of the discrete algebraic equation, there is also $\Khat_k - K^* \sim \C(-\gamma)$. In particular, $\{\Khat_k\}$ converges to $K^*$ almost surely, and since $K^*$ is stabilizing, we have for almost every realization of randomness, $\Khat_k$ is also stabilizing for sufficiently large $k$. Assuming w.l.o.g. that $\Khat_k$ is stabilizng for any $k$, we have $J^{\pihat_k}_W - J^{\pi^*}_W = \Tr(W(P_k -  P^*))$, where $P^*$ is the Lyapunov equation solution corresponding to $K^*$, also the solution to the Riccati equation~\eqref{eq:dare}, and $P_k$ is the Lyapunov equation solution corresponding to $\Khat_k$. According to Lemma~\ref{lemma:riccati_sensitivity}, we have
    \begin{align*}
        \norm{P_k - P^*}_F \leq \left\|\left(I-\tilde{A}_k^{\top} \otimes \tilde{A}_k^{\top}\right)^{-1}\right\|_{2}\|R\|_{F} \norm{\Khat_k - K^*}_F^2,
    \end{align*}
    where $\At_k = A+B\Khat_k$. Since every $\At_k$ is stable, we have $\left\|\left(I-\tilde{A}_k^{\top} \otimes \tilde{A}_k^{\top}\right)^{-1}\right\|_{2}$ is a continuous function of $\At_k$. Meanwhile, $\{\At_k\}$ converges to $A+BK^*$, which implies $\left\|\left(I-\tilde{A}_k^{\top} \otimes \tilde{A}_k^{\top}\right)^{-1}\right\|_{2}$ is bounded. Hence, $\Khat_k - K^* \sim \C(-\gamma)$ implies $P_k - P^* \sim \C(-2\gamma)$. Finally, we have $J^{\pihat_k}_W - J^{\pi^*}_W = \Tr(W(P_k -  P^*)) \sim \C(-2\gamma)$.
    
    \item $J^{\pihat_k}_{W+(k+1)^{-2\beta}I} - J^{\pihat_k}_W$: from $J^{\pihat_k}_W = \Tr(WP_k)$, with $P_k$ defined the same as above, we have $J^{\pihat_k}_{W+(k+1)^{-2\beta}I} - J^{\pihat_k}_W = \Tr((k+1)^{-2\beta}P_k)$. Since $\{P_k\}$ converges to $P^* \succ 0$, we have $J^{\pihat_k}_{W+(k+1)^{-2\beta}I} - J^{\pihat_k}_W \sim \C(-2\beta)$.

    \item $J^{\pit_k}_{W+(k+1)^{-2\beta}I} - J^{\pihat_k}_{W+(k+1)^{-2\beta}I}$: we can basically apply Lemma~\ref{lemma:Jexp}.
    To verify the conditions of Lemma~\ref{lemma:Jexp}, we first fix $\rho,P$: since $A+BK^*$ is stable, for any fixed $Q \succ 0$, the discrete Lyapunov equation $(A+BK^*)^\top P (A+BK^*) + Q = P$ has a solution $P \succ 0$. Therefore, there exists $P\succ 0, 0 < \rho < 1$, such that $(A+BK^*)^\top P (A+BK^*) < \rho P$. Since $\{\Khat_k\}$ converges to $K^*$ almost surely, we may assume w.l.o.g. that $(A+B\Khat_k)^\top P (A+B\Khat_k) < \rho P$ for any $k$. Furthermore, since $A$ is stable, there is $A^t \to 0$ as $t\to \infty$, and therefore, with $t = \lfloor \log k \rfloor$ sufficiently large, we also have $\left(A^t\right)^\top P A^t < \rho P$. For an upper bound of the noise magnitude, we insert $W+I$ in place of $W$ in~\eqref{eq:forcedsys3} to compute $\Wn$. Now that $A,B,Q,R,\Wn,\rho, P$ are all fixed, we can apply the conclusion of Lemma~\ref{lemma:Jexp} by taking the limit $\Khat_k \to K^*$ to obtain $J^{\pit_k}_{W+(k+1)^{-2\beta}I} - J^{\pihat_k}_{W+(k+1)^{-2\beta}I}\sim \O((\log k)^2 \exp(-c(\log k)^2)) \sim \C(-\infty)$.
    
    \item $J^{\pi_k}_W - J^{\pit_k}_{W+(k+1)^{-2\beta}I}$: these two costs are associated with the same closed-loop system and differ only in the $u_k^\top R u_k$ terms. In particular, $J^{\pi_k}_W - J^{\pit_k}_{W+(k+1)^{-2\beta}I} = \Tr((k+1)^{-2\beta}R) \sim \C(-2\beta)$.
\end{enumerate}

Adding up the above four terms using Lemma~\ref{lemma:function_class}, we obtain $J^{\pi_k} - J^* \sim \C(\max\{-2\beta, -2\gamma, -\infty\}) \sim \C(-\min\{2\beta, 2\gamma\})$. According to our definition of $\C(\alpha)$, this is to say it holds almost surely that
\begin{align*}
    \lim_{k\to\infty} \frac{J^{\pi_k} - J^*}{k^{-\min\{2\beta, 2\gamma\} + \epsilon}} = 0,
\end{align*}
for any $\epsilon>0$, which concludes our proof.
\section{An illustration of oscillation under switching}\label{sec:oscillation}
In this section, we provide a simple illustrative example to explain why we need prolonging ``non-action'' period in Algorithm~\ref{alg:policy}.

Consider a simple two-dimensional noise-free system
\begin{align*}
    x_{k+1} = A_k x_k,
\end{align*}
where the candidates for $A_k$ are
\begin{align*}
    A_0 = \begin{bmatrix} 0.5 & 2 \\ 0 & 0.5 \end{bmatrix}, A_1 = \begin{bmatrix} 0.5 & 0 \\ 2 & 0.5 \end{bmatrix}.
\end{align*}
It can be seen $\rho(A_0) = \rho(A_1) = 0.5$, i.e., both the system matrices are stable.

Now consider the following switching strategy:
\begin{itemize}
    \item If $\norm{x_k} \geq M$, then apply $A_0$ for $t$ consecutive steps.
    \item Otherwise, apply $A_1$.
\end{itemize}

Figure~\ref{fig:osc} shows simulation results with $x_0 = (0.1, 1),M=1$ and $t = 1,2$. We can observe that even for this simple system, the frequent switching caused by $t = 1$ may cause the state to oscillate, while $t=2$ suffices to suppress the oscillation. Indeed, we can verify $\rho(A_1A_0) \approx 4.5 > 1$ even though both $A_0$ and $A_1$ are stable. However, as long as $A_0$ is stable, $A_1A_0^t$ will eventually become stable as $t$ is chosen to be sufficiently large. This explains why we use prolonging $t$ in our policy.

\begin{figure}[!htbp]
    \centering
    \includegraphics[width=0.8\textwidth]{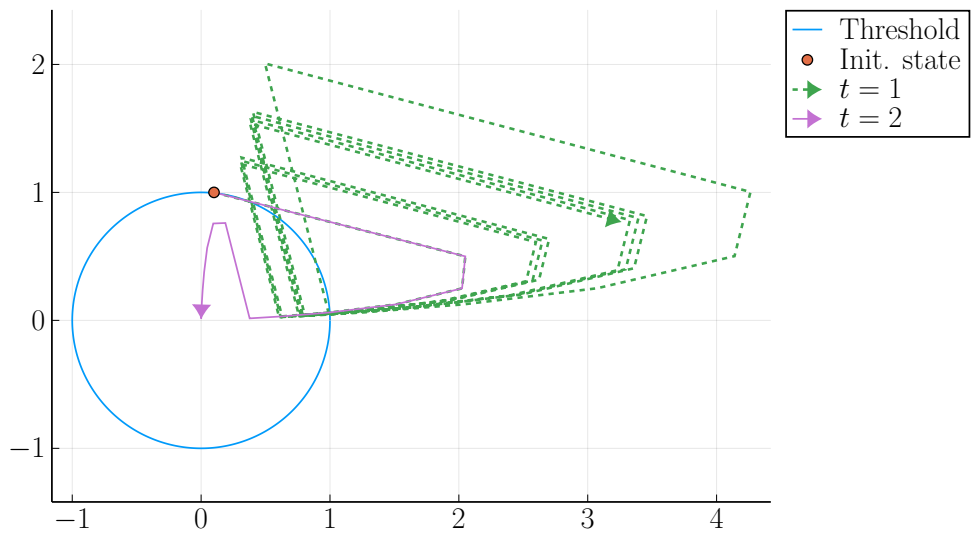}
    \caption{Illustrative example of oscillation under switching. In this example, $t=1$ causes the state to oscillate, while $t=2$ can suppress the oscillation.}
    \label{fig:osc}
\end{figure}

\end{document}